\documentclass[11pt]{article}
\usepackage[left=1in,top=1in,right=1in,bottom=1in,head=.1in,nofoot]{geometry}

\setlength{\footskip}{24pt} % Page number/footer spacing
\usepackage{setspace,url,bm,amsmath} % For double-spacing, URL font, math symbols
\usepackage{float}
\usepackage{caption}
\usepackage{subcaption,siunitx,booktabs}

\usepackage{titlesec} % Section header formatting
\usepackage{rotating}
\titlelabel{\thetitle.\quad} % Section header formatting
\titleformat*{\section}{\bf\large}

\usepackage[titletoc,title]{appendix}
\usepackage{graphicx} % Graphics scaling
\usepackage{bbm}
\usepackage{latexsym}
\usepackage{xcolor}
\usepackage{authblk}
\usepackage{amsthm}
\usepackage{amsfonts}
\usepackage{bm}
\usepackage{enumitem}
\usepackage{amsmath}
\usepackage{appendix}

\theoremstyle{definition}
\newtheorem{theorem}{Theorem}

\newtheorem{lemma}{Lemma}

\newtheorem{corollary}{Corollary}
\newtheorem{remark}{Remark}
\newtheorem{condition}{Condition}

\usepackage{natbib} % ASA citation style
\bibpunct{(}{)}{;}{a}{}{,} % ASA citation style
\usepackage{enumitem}
\usepackage{tcolorbox}

 % Capitalize bibliography section header
\usepackage{etoolbox} % Bibliography underfull/overfull box fix
\apptocmd{\sloppy}{\hbadness 10000\relax}{}{} % Bibliography underfull/overfull box fix

\def\T{\textnormal{T}}
\def\var{\textnormal{Var}}
\def\cov{\textnormal{Cov}}

\def \tr{\text{tr}}
\def \Dtil{\widetilde{\mathcal{D}}}

\def \Dtil{\widetilde{\mathcal{D}}}
\def \Dprime{\mathcal{D}^\prime}
\DeclareMathOperator*{\argmax}{arg\!\max}
\DeclareMathOperator*{\argmin}{arg\!\min}
\DeclareUnicodeCharacter{2212}{-}

\newtheorem*{theorem3.2}{Theorem 3.2 (\cite{FRIEDRICH20151})}
\newtheorem*{Definition}{Definition}

{\endtcolorbox}

\title{Optimal allocation of sample size for randomization-based inference from $2^K$ factorial designs}

\author[*]{Arun Ravichandran}
\author[*]{Nicole E. Pashley}
\author[**]{Brian Libgober}
\author[*]{Tirthankar Dasgupta}

\affil[*]{Department of Statistics, Rutgers University}
\affil[**]{Department of Political Science and Law, Northwestern University}

\begin{document}
	
	\date{}
	
	\doublespacing
	
	\maketitle
	
	\begin{abstract}
		Optimizing the allocation of units into treatment groups can help researchers improve the precision of causal estimators and decrease costs when running factorial experiments. However, existing optimal allocation results typically assume a super-population model and that the outcome data comes from a known family of distributions. Instead, we focus on randomization-based causal inference for the finite-population setting, which does not require model specifications for the data or sampling assumptions. We propose exact theoretical solutions for optimal allocation in $2^K$ factorial experiments under complete randomization with A-, D- and E-optimality criteria. We then extend this work to factorial designs with block randomization. We also derive results for optimal allocations when using cost-based constraints. To connect our theory to practice, we provide convenient integer-constrained programming solutions using a greedy optimization approach to find integer optimal allocation solutions for both complete and block randomization. The proposed methods are demonstrated using two real-life factorial experiments conducted by social scientists.
	\end{abstract}

	\section{Introduction}
	\label{sec:intro}
	
	Randomized $2^K$ factorial experiments are conducted to assess the marginal causal effects of $K$ factors, each with two levels, along with their interactions on a response of interest. The two levels are often denoted as the ``high level'' and ``low level'' of the factor \citep{Fisher1935, Yates1937}. With $K$ factors, there are $2^K$ unique treatment combinations to which units can be assigned. In the twentieth century, factorial designs have mostly been discussed in an industrial setting, whereas in recent times, there has been a lot of interest in their application in the social, behavioral and biomedical sciences and randomization-based inference from such designs \citep[e.g.,][]{BDR2016, Egami_Imai2019}.
	
	Randomization-based inference is a useful methodology for drawing inference on causal effects of treatments in a finite population setting \citep[e.g.,][]{Freedman2006, Freedman2008a}. A major advantage of randomization-based inference is that it applies even if the experimental units are not randomly sampled from a larger population, which is the case in most social science experiments \citep{Abadie2020, Olsen2013}. The theory, methods, and applications of randomization-based inference for two-level factorial experiments with a completely randomized treatment assignment mechanism have been developed and discussed \citep[e.g.,][]{DPR2015, Lu2016}. Further, randomization-based inference from experiments with more general factorial structures and complex assignment mechanisms have been discussed in \cite{MDR2018}. Connections between regression-based and randomization-based causal inference from factorial experiments have been studied by \cite{ZhaoDing2021}.
	
	Despite the growing literature in this area, most of the recent research on randomization-based inference of factorial experiments has been confined to the analysis side. On the design side, the main focus has been on rerandomization \citep{BDR2016, li_rerand, morgan2012rerandomization}, which generalizes the idea of blocking by pre-defining an acceptable criterion for randomization based on covariate balance between treatment groups. 
	There have also been extensions to fractional and incomplete factorial designs \citep{pashley2022causal} and to the use of screening steps \citep{shi2023forward}.
	However, the distribution of the total number of experimental units into the $2^K$ treatment groups has not received much attention. Balanced designs that assign equal number of units to the treatment groups are often the default choice, but it is unclear whether they are the ``best'' design under different conditions. 
	One work that does discuss how to allocate units to optimize precision of factorial estimators from the randomization-based perspective is \cite{blackwell2022batch}.
	That work explores the advantages of Neyman-Allocation \citep{Neyman34, Cochran77} by extending the two-stage adaptive design in \cite{Hahn2011} to multiple treatment designs.
	\cite{dai2023clip} similarly explores Neyman-Allocation but within sequential designs.
	
	To motivate this problem, we consider an education experiment from \cite{Angrist2009}, conducted to assess the causal effects of two different interventions, a student support program (SSP) and a student fellowship program (SFP), on the academic performance of freshmen. This is a $2^2$ factorial experiment in which each unit (freshman) can receive only one of four treatment combinations: control (neither of the two), SSP only, SFP only, and SFSP (both). The units were divided into two blocks based on their sex. Table~\ref{tab:Angrist} shows the allocation of units within each block to the four treatment combinations.
	
	\begin{table}[htbp]
		\caption{Allocation of units to treatment combinations} \label{tab:Angrist}
		\begin{center}
			\begin{tabular}{|c|c|c|c|c|}
				\hline
				Sex     & Control & SFP & SSP & SFSP \\  \hline
				Female  &  574 & 150 & 142 & 82 \\
				Male &   432 & 100 & 108 & 68 \\ \hline
				Total & 1006 & 250 & 250 & 150  \\  \hline
			\end{tabular}
		\end{center}
	\end{table}
	
	Clearly, the design is unbalanced, with the highest number of units assigned to control and the fewest to the treatment SFSP. Such an allocation, among other reasons, could be motivated by the budget for experimentation. The question we investigate in this paper is the following: Under what assumptions, conditions, and requirements will such an allocation be the best possible one (in terms of being able to precisely answer scientific questions of interest)?
	
	The problem of finding optimal designs in the context of model-based inference has been extensively studied in the twentieth century \citep[see][for example]{Atkinson2007}. In such settings, optimal designs depend on a postulated outcome model, that may be linear or non-linear. For example, for binary responses, optimal designs based on logistic models are likely to differ from those based on probit models, and depend on unknown model parameters \citep{Mandal2012, Mandal2015, Mandal2016}. We aim to develop optimal designs that are tied to \emph{model-free, randomization-based analysis} for finite and super populations. In addition to being robust to model assumptions, our approach works for continuous as well as binary outcomes as long the finite-population estimand is well-defined. 
	
	This paper is organized as follows: The next section introduces basic notation and estimands for factorial experiments using the potential outcomes framework. In Section \ref{sec:CRDfactorial} we derive optimal allocations of the $N$ units in a population to different treatment combinations under three commonly used optimality criteria for a completely randomized design (CRD). In addition to theoretical results for exact optimal allocations, we also provide numerical algorithms for obtaining integer solutions. In Section \ref{sec:RBDfactorial}, we extend our results for CRDs to the setting of randomized block designs (RBDs). In Section \ref{sec:costbased_optimal_alloc}, we derive optimality results under cost constraints. Two different applications of the proposed methodology, the motivating education experiment and an audit experiment conducted to assess discrimination, are described in Section \ref{sec:examples}. We conclude with a discussion, including opportunities for future work, in Section \ref{sec:discussion}.
	
	\section{$2^K$ factorial experiments under the potential outcomes framework} \label{sec:notation}
	
	Here, we introduce some key definitions and notation from \cite{DPR2015}. Consider a $2^K$ experiment with $N$ units, in which the levels of each of the $K$ factors are denoted by 0 and 1. Each treatment combination is of the form $\mathbf{z}_j = (z_1 \ldots, z_K)$, where $z_k \in \{0,1\}$ for $k \in 1, \ldots, K$. There are $J = 2^K$ treatment combinations arranged in lexicographic order $1, \ldots, J$, where treatment combination $\mathbf{z}_j$ is such that $j = 2^{K-1} z_1 + 2^{K-2} z_2 + \ldots + z_{K} + 1$. In other words, $(z_1 \ldots z_K)$ is a binary representation of integer $j-1$. Thus, for example, in a $2^2$ experiment, the four treatment combinations 00, 01, 10 and 11 are numbered as $j = 1, 2, 3$ and $4$ respectively, and the $8^{th}$ treatment combination in a $2^4$ experiment is $0111$.
	We will just refer to the treatment combination by its number ($j$) in notation below.
	
	For $i = 1, \ldots, N$, under the Stable Unit Treatment Value Assumption or SUTVA \citep{rubin1980}, the $i^{th}$ unit has $J = 2^K$ potential outcomes $Y_i(1), \ldots, Y_i(J)$ corresponding to the $J$ treatment combinations $\mathbf{z}_1, \ldots, \mathbf{z}_J$. Let $\mathbf{Y}_i$ denote the $J \times 1$ vector of potential outcomes for unit $i$.  For unit $i$, the unit-level main effect of factor $k=1, \ldots, K$ is defined as the difference between the averages of potential outcomes for unit $i$ for which the levels of factor $k$ are at levels 0 versus 1. Mathematically, it is a contrast of the form
	$2^{-(K-1)} {\bm \lambda}_k^{\T} \mathbf{Y}_i = 2^{-(K-1)} \sum_{j=1}^J \lambda_{jk} \mathbf{Y}_i(j)$, where $\mathbf{x}^\T$ denotes the transpose of vector $\mathbf{x}$, ${\bm \lambda}_k$ is a $J \times 1$ column-vector with coefficient $\lambda_{jk}$ such that $\lambda_{jk} = -1$ if the level of factor $k$ in $j^{th}$ treatment combination is 0, and $\lambda_{jk} = 1$ otherwise. For all $k=1, \ldots, K$,  ${\bm \lambda}_k$ is a contrast vector, i.e., $\sum_{j=1}^J \lambda_{jk} = 0$.
	
	Proceeding along the lines of \cite{DPR2015}, for unit $i$, we can define ${K \choose 2}$ two-factor interactions, ${K \choose 3}$ three-factor interactions, and finally one $K$-factor interaction as contrasts of the form $2^{-(K-1)} {\bm \lambda}^{\T} \mathbf{Y}_i$, where the contrast vector ${\bm \lambda}$ for any interaction can be derived by element-wise multiplication of the contrast vectors of the corresponding main effects ${\bm \lambda}_k$, for factors involved in the interaction. Denoting the $J-1 = 2^K-1$ contrast vectors for the $J-1$ unit-level factorial effects $\tau_{1i}, \ldots, \tau_{J-1,i}$ by ${\bm \lambda}_1, \ldots, {\bm \lambda}_{J-1}$ respectively, we define the $J \times J$ matrix as
	\begin{equation}
		\bm{L} = ({\bm \lambda}_0, {\bm \lambda}_1, \ldots, {\bm \lambda}_{J-1}), \label{eq:matrixL}
	\end{equation}
	where ${\bm \lambda}_0$ is the $J \times 1$ vector with all elements equal to one. We note that $\mathbf{L}$ is an orthogonal matrix with $\mathbf{L} \mathbf{L}^\T = \mathbf{L}^\T \mathbf{L} =2^{K-1} \mathbf{I}_J$, where $\mathbf{I}_J$ denotes the identity matrix of order $J$. For $i=1, \ldots, N$, let ${\bm \tau}_i = (2\tau_{0i}, \tau_{1i}, \ldots, \tau_{J-1,i})^{\T}$, where $\tau_{0i}$ denotes the average of all potential outcomes for unit $i$. The linear transform between the vector of unit-level potential outcomes $\mathbf{Y}_i$ and the vector of unit-level factorial effects $\bm{\tau}_i$ can be expressed as 
	\begin{equation}
		\bm{\tau}_i = 2^{-(K-1)} \mathbf{L}^{\T} \mathbf{Y}_i. \label{eq:unitleveltau}
	\end{equation} 
	
	Having defined unit-level factorial effects, we now move to their population-level counterparts.
	Let $\overline{\mathbf{Y}} = N^{-1} \sum_{i=1} \mathbf{Y}_i$ and $\bm{\tau} = N^{-1} \sum_{i=1} {\bm \tau}_i$ respectively denote the $J \times 1$ vectors of average potential outcomes and the average factorial effects. Then, averaging (\ref{eq:unitleveltau}) over $i = 1, \ldots, N$, the vector of population-level factorial effects is given by
	\begin{equation}
		\bm{\tau} = 2^{-(K-1)} \mathbf{L}^{\T} \overline{\mathbf{Y}}. \label{eq:popleveltau}
	\end{equation}  
	Note that the first element of $\bm{\tau}$ is twice the average of all potential outcomes.
	
	In a CRD, a pre-assigned numbers of units, $N_{j}$, are randomly assigned to treatment $j$. The experiment generates an $N \times 1$ vector of observed outcomes data from which the vector of factorial effects ${\bm \tau}$ can be unbiasedly estimated. We examine the properties of $\widehat{\bm \tau}$, the unbiased estimator of ${\bm \tau}$, with respect to its randomization distribution, and formulate the problem of optimally allocating the $N$ units to the $J$ treatment combinations in Section~\ref{sec:CRDfactorial}. 
	
	\begin{remark}[A super population perspective]\label{ss:superpop}
		While the finite-population perspective does not depend on any hypothetical data generating process for the outcomes, alternative approaches assume that the potential outcomes are drawn from a, possibly hypothetical, super population. Assuming that $\mathbf{Y}_1, \ldots, \mathbf{Y}_N$ are independent and identically distributed random vectors with $E[\mathbf{Y}_i] = {\bm \mu}$, factorial effects at a super population level are defined as
		\begin{equation}
			\bm{\tau}^{\text{SP}} = 2^{-(K-1)} \mathbf{L}^{\T} {\bm{\mu}}. \nonumber
		\end{equation}
		\cite{DL2017SP} discussed the conceptual and mathematical connections between finite- and super-population inference, showing that while the same estimator commonly used to estimate $\bm{\tau}$ unbiasedly is also an unbiased estimator of $\bm{\tau}^{\text{SP}}$, its sampling variances under the two perspectives are different.
	\end{remark}

	\section{Optimal designs for completely randomized experiments} \label{sec:CRDfactorial}
	
	In a randomized experiment with $N_j$ units assigned to treatment combination $j \in \{1, \ldots, J\}$, only one of the $J$ potential outcomes is observed for unit $i$. This observed outcome is $y_i = Y_i(T_i)$ for $i=1, \ldots, N$, where $T_i$ is the random treatment assignment variable for unit $i$ taking value $j$ if unit $i$ receives treatment $j$. In a CRD, the joint probability distribution of $(T_1, \ldots, T_N)$ is
	\[ 
	P[(T_1, \ldots, T_N) = (t_1, \ldots, t_N)]  = \left\{ \begin{array}{cc} 
		\left( N_1! \ldots N_J!\right) / N! & \mbox{if} \ \sum_{i=1}^N \mathbbm{1}_{\{t_i = j\}} = N_j \ \mbox{for} \ j=1, \ldots, J, \\
		0 & \mbox{otherwise,} 
	\end{array}
	\right.
	\]
	where $\mathbbm{1}_{\{A\}}$ denotes the indicator random variable for set $A$. Let 
	$\overline{y}(j) = N_j^{-1} \sum_{i=1}^N \mathbbm{1}_{\{t_i = j\}} Y_i(j)$ denote the average response for treatment $j$. Let $\overline{\mathbf{y}}$ denote the vector $\left(\overline{y}(1), \overline{y}(2), \ldots,\overline{y}(J)\right)^{\T}$ of observed averages. Substituting $\overline{\mathbf{y}}$ in place of $\overline{\mathbf{Y}}$ in (\ref{eq:popleveltau}), we can unbiasedly estimate the vector of factorial effects as
	\begin{equation}
		\widehat{\bm \tau} = 2^{-(K-1)} \mathbf{L}^{\T} \overline{\mathbf{y}}. \label{eq:est_tau}
	\end{equation}
	
	\cite{Lu2016} derived sampling properties of the estimator $\widehat{\bm \tau}$ with respect to its randomization distribution for the general case of unequal $N_1, \ldots, N_J$. \citeauthor{Lu2016} showed that $\widehat{\bm \tau}$ is an unbiased estimator of  $\bm \tau$, and has the following finite-population covariance matrix:
	\begin{equation}
		\mathbf{V}_{\bm \tau} = \var \left(\widehat{\bm \tau} \right) = \frac{1}{2^{2(K-1)}} \sum_{j=1}^{2^K} \frac{S^2_j}{N_j} \widetilde{\bm  \lambda}_j \widetilde{\bm \lambda_j} ^{\T} - \frac{1}{N(N-1)} \sum_{i=1}^N  \left( \bm{\tau}_i  - \bm{\tau} \right) \left( \bm{\tau}_i  - \bm{\tau} \right)^{\T}, \label{eq:cov_matrix}
	\end{equation}
	where $\widetilde{\bm \lambda}_j$ represents the transpose of row $j$ of the model matrix $\mathbf{L}$ defined in (\ref{eq:matrixL}), ${\bm \tau}_i$ denotes the vector of unit-level factorial effects given by (\ref{eq:unitleveltau}), $\bm{\tau}$ the vector of population-level factorial effects given by (\ref{eq:popleveltau}), and
	\begin{equation}
		S^2_j = \frac{1}{N-1} \sum_{i=1}^N \left(Y_i(j) - \overline{Y}(j) \right)^2 \label{eq:varj}
	\end{equation}
	the variance of all $N$ potential outcomes for treatment $j$ with divisor $N-1$, where $\overline{Y}(j) = N^{-1} \sum_{i=1}^N Y_i(j)$. 
	
	In the spirit of classical optimal designs \citep{Atkinson2007}, we can define a design optimality criterion as a functional of the matrix  $\mathbf{V}_{\bm \tau}$ defined in (\ref{eq:cov_matrix}). For example, the D-optimality criterion, which aims to minimize the determinant of the covariance matrix, or the A-optimality criterion, which aims to minimize the trace of the covariance matrix, or the E-optimality criterion which aims to minimize the maximum eigenvalue of the covariance matrix, can be considered. 
	However, the second term $1/(N(N-1)) \sum_{i=1}^N  \left( \bm{\tau}_i  - \bm{\tau} \right) \left( \bm{\tau}_i  - \bm{\tau} \right)^{\T}$, which is a measure of heterogeneity of treatment effects, cannot be estimated from observed data, because none of the unit-level treatment effects $\bm{\tau}_i$ are estimable without additional assumptions due to the missing potential outcomes. Because $1/(N(N-1)) \sum_{i=1}^N  \left( \bm{\tau}_i  - \bm{\tau} \right) \left( \bm{\tau}_i  - \bm{\tau} \right)^{\T}$ is positive semi-definite, the first term of (\ref{eq:cov_matrix}) can be considered an upper bound of $\mathbf{V}_{\bm \tau}$, which is attained under specific restrictions on the potential outcomes (e.g., treatment effect homogeneity). Thus, we propose optimizing a functional of the first term of (\ref{eq:cov_matrix}), which in turn is equivalent to optimizing a functional of the positive definite matrix
	\begin{equation}
		\widetilde{\mathbf{V}} = \sum_{j=1}^{J} \frac{S^2_j}{N_j} \widetilde{\bm  \lambda}_j \widetilde{\bm \lambda_j}^{\T} =  \mathbf{L}^{\T} \mathbf{A} \mathbf{L}, \nonumber
	\end{equation}
	instead, where $\mathbf{A} = \text{diag}(S^2_1/N_1, \ldots, S^2_J/N_J)$.
	
	Another justification for using a functional of the matrix $\widetilde{\mathbf{V}}$ as an optimality criterion comes from the super-population perspective mentioned in Section \ref{ss:superpop}. \cite{DL2017SP} showed that the estimator $\widehat{\bm \tau}$ defined earlier is also an unbiased estimator of the super-population estimand $\bm{\tau}^{\text{SP}}$. Further, extending their argument for a single factor with two levels to the case of $2^K$ factorial designs, if $V_j^2 = \var[Y_i(j)]$, $j=1, \ldots, J$, then a variance decomposition yields the following sampling variance of $\widehat{\bm \tau}$  
	\begin{equation}
		\var^{\text{SP}} \left(\widehat{\bm \tau} \right) = \frac{1}{2^{2(K-1)}} \sum_{j=1}^{2^K} \frac{V_j^2}{N_j} \widetilde{\bm  \lambda}_j \widetilde{\bm \lambda_j} ^{\T} =  \frac{1}{2^{2(K-1)}} \sum_{j=1}^{2^K} \frac{E(S^2_j)}{N_j} \widetilde{\bm  \lambda}_j \widetilde{\bm \lambda_j} ^{\T}, \nonumber
	\end{equation}
	where $\var^{\text{SP}}$ denotes variance over the random sampling from the super population and random assignment, and $V_j^2 = E(S_j^2)$ represents expectation of $S^2_j$ with respect to the distribution of the potential outcomes in the super population.
	This connection provides further motivation for the form of our optimization, but we focus on the finite-population setting going forward.
	
	\subsection{Exact optimal designs} \label{ss:exact_CRD}
	
	The problem of finding an optimal design can be formulated as minimization of an appropriate functional $\psi(\widetilde{\mathbf{V}})$ subject to the constraint $\sum_{j=1}^J N_j = N$ or equivalently as $\sum_{j=1}^J p_j = 1$ in terms of the proportions of units $p_j = N_j/N$ to be assigned to treatment combination $j$. As discussed earlier, we consider three widely used functionals in optimal design literature: the D-optimality criterion where $\psi(\widetilde{\mathbf{V}}) = \left|\widetilde{\mathbf{V}} \right|$ and $\left|.\right|$ refers to the determinant, the A-optimality criterion where $\psi(\widetilde{\mathbf{V}}) = \textrm{tr} \left( \widetilde{\mathbf{V}} \right)$, and the E-optimality criterion where $\psi(\widetilde{\mathbf{V}}) = \max \{\nu_1, \ldots, \nu_J \}$ and $\nu_1, \ldots, \nu_J $ are the eigenvalues of $\widetilde{\mathbf{V}}$.  The following theorem, proved in Supplementary Material~\ref{App:ProofofResults}, summarizes these optimality results.
	
	\begin{theorem} \label{thm:optimal_allocation} Let $N$ units be allocated to $J$ treatment groups such that $p_j = N_j/N$ proportion of units receive treatment $j$. Then, the optimal allocation of $N$ units to $J$ treatment groups on the basis of covariance matrix  $\widetilde{\bm{V}}$ under  
		\begin{itemize}
			\item[(a)] A-optimality is  proportional to the finite-population standard deviations of potential outcomes in the corresponding treatment groups, i.e., $p_j = S_j/(\sum_j S_j)$.
			\item[(b)] D-optimality is balanced assignment to all $J$ treatment groups, i.e., $p_j = 1/J $.
			\item[(c)] E-optimality is proportional to the finite-population variances of potential outcomes in the corresponding treatment groups, i.e.,  $p_j = S_j^2/(\sum_j  S_j^2 )$.
		\end{itemize}
	\end{theorem}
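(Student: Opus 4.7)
The whole proof rests on a single structural observation about the model matrix $\mathbf{L}$: by construction it satisfies $\mathbf{L}\mathbf{L}^{\T}=\mathbf{L}^{\T}\mathbf{L}=2^{K-1}\mathbf{I}_J$, so $\mathbf{L}/\sqrt{2^{K-1}}$ is orthogonal. Consequently, each of the three functionals of $\widetilde{\mathbf{V}}=\mathbf{L}^{\T}\mathbf{A}\mathbf{L}$ can be rewritten as a simple functional of the diagonal matrix $\mathbf{A}=\mathrm{diag}(S_1^2/N_1,\ldots,S_J^2/N_J)$, after which each optimization reduces to a textbook inequality with an explicit minimizer. My plan is therefore to establish the three reductions first, and then solve each resulting one-line optimization problem subject to $\sum_j N_j=N$.

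For part (b) (D-optimality), I would use $\lvert\widetilde{\mathbf{V}}\rvert=\lvert\mathbf{L}\rvert^{2}\lvert\mathbf{A}\rvert=2^{(K-1)J}\prod_{j}S_{j}^{2}/N_{j}$. Since $\prod_{j}S_{j}^{2}$ does not depend on the allocation, minimizing the determinant is the same as maximizing $\prod_{j}N_{j}$ under $\sum_{j}N_{j}=N$, and AM–GM gives the unique optimum $N_{j}=N/J$, i.e.\ $p_j=1/J$. For part (a) (A-optimality), the cyclic trace identity gives $\mathrm{tr}(\widetilde{\mathbf{V}})=\mathrm{tr}(\mathbf{A}\mathbf{L}\mathbf{L}^{\T})=2^{K-1}\sum_{j}S_{j}^{2}/N_{j}$, so I need to minimize $\sum_{j}S_{j}^{2}/N_{j}$ under the linear constraint. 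Cauchy–Schwarz (or equivalently a Lagrange multiplier) yields
\begin{equation*}
\Big(\sum_{j}S_{j}\Big)^{2}\;=\;\Big(\sum_{j}\tfrac{S_{j}}{\sqrt{N_{j}}}\sqrt{N_{j}}\Big)^{2}\;\le\;\Big(\sum_{j}\tfrac{S_{j}^{2}}{N_{j}}\Big)\Big(\sum_{j}N_{j}\Big)\;=\;N\sum_{j}\tfrac{S_{j}^{2}}{N_{j}},
\end{equation*}
with equality iff $N_{j}\propto S_{j}$, which gives $p_{j}=S_{j}/\sum_{k}S_{k}$.

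For part (c) (E-optimality), I would note that $\widetilde{\mathbf{V}}=\mathbf{L}^{\T}\mathbf{A}\mathbf{L}$ is similar to $\mathbf{L}\mathbf{L}^{\T}\mathbf{A}=2^{K-1}\mathbf{A}$ (since $\mathbf{L}^{-1}=2^{-(K-1)}\mathbf{L}^{\T}$), so the eigenvalues of $\widetilde{\mathbf{V}}$ are exactly $\{2^{K-1}S_{j}^{2}/N_{j}\}_{j=1}^{J}$ and minimizing the largest one is the same as minimizing $\max_{j}S_{j}^{2}/N_{j}$. A standard min–max argument shows the minimizer equalizes all coordinates: if $S_{j}^{2}/N_{j}$ is larger for some $j$ than for another $j'$, one can shift a small mass from $N_{j'}$ to $N_{j}$ and strictly decrease the maximum, so at an optimum $S_{j}^{2}/N_{j}=c$ for all $j$; combined with $\sum_{j}N_{j}=N$ this forces $c=\sum_{k}S_{k}^{2}/N$ and hence $p_{j}=S_{j}^{2}/\sum_{k}S_{k}^{2}$.

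None of these steps is really difficult, and I don't expect any technical obstacle: the whole proof hinges on exploiting the scaled-orthogonality of $\mathbf{L}$ up front, after which the three sub-problems decouple into one AM–GM, one Cauchy–Schwarz, and one elementary min–max argument. If anything requires care it is the E-optimality step—specifically, justifying that the minimax is attained by equalization requires noting that the max over a finite set of continuous decreasing functions of $N_{j}$ is continuous and that any non-equalized allocation is not a local minimum—but this is routine once the spectrum of $\widetilde{\mathbf{V}}$ has been identified with that of $2^{K-1}\mathbf{A}$.
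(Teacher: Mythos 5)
Your proposal is correct and follows essentially the same route as the paper: the key step in both is identifying the spectrum of $\widetilde{\mathbf{V}}=\mathbf{L}^{\T}\mathbf{A}\mathbf{L}$ with a constant multiple of $\{S_j^2/N_j\}$ (the paper does this via an explicit spectral decomposition using the orthonormal rows of a rescaled $\mathbf{L}$; you do it via the similarity of $\mathbf{L}^{\T}\mathbf{A}\mathbf{L}$ and $\mathbf{L}\mathbf{L}^{\T}\mathbf{A}$ --- an equivalent observation). For (a) and (b) your Cauchy--Schwarz and AM--GM arguments are interchangeable with the paper's Lagrange-multiplier computations and reach the same minimizers; the constant ($2^{K-1}$ versus $J=2^K$) is immaterial since it does not affect the argmin.

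The one place you are slightly too quick is part (c). Your exchange argument --- ``if $S_j^2/N_j$ is larger for some $j$ than for some $j'$, shift mass from $N_{j'}$ to $N_j$ and strictly decrease the maximum'' --- fails as literally stated when the maximum value $\max_j S_j^2/N_j$ is attained at \emph{several} indices: moving mass onto only one of the maximizers lowers that coordinate but leaves the others, and hence the maximum, unchanged. The paper's proof is careful about exactly this: it introduces the set $\mathcal{M}$ of all indices attaining the maximal eigenvalue and perturbs the allocations of \emph{every} $j_1\in\mathcal{M}$ together with one $j_2\notin\mathcal{M}$ (reallocating the pooled mass $\sum_{j\in\mathcal{M}}N_j+N_{j_2}$ proportionally to the $S_j^2$), which strictly lowers all tied maxima simultaneously while keeping the remaining eigenvalues below the old maximum. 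This is a one-line repair of your argument, but as written the tie case is a genuine hole in the minimax step, which is otherwise the only nontrivial part of the theorem.
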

	
	\medskip
	\begin{remark}
		The optimality results in Theorem~\ref{thm:optimal_allocation} are similar in spirit to the determination of optimal sample sizes in stratified survey sampling \citep[Ch. 5]{Cochran1977}. 
		The A-optimality result is the same as the Neyman-Allocation discussed in \cite{blackwell2022batch}, who motivate its use as reducing the identifiable portion of the finite-population variance of classical factorial estimators.
		Derivations of the A- and D- optimal designs are straightforward and use the Lagrangian multiplier-based optimization technique. The proof of the E-optimality result uses the idea of perturbing the eigenvalues of a scaled identity matrix to show that the E-optimal design is indeed characterized by equal eigenvalues of the matrix $\widetilde{\mathbf{V}}$.
	\end{remark}

	\begin{remark}
		In order to implement the results of Theorem~\ref{thm:optimal_allocation}, researchers need to ``guess'' the variances $S^2_j$, $j=1, \ldots, J$ and substitute them into the expressions for $p_j$. This is similar to the application of optimal designs in non-linear models, where optimal designs are actually ``locally optimal'' \citep{Chernoff1953}. It is often a common practice to conduct pilot studies to obtain some preliminary estimates of the $S^2_j$'s, as done in finite-population survey sampling.
	\end{remark}
	
	We now introduce two conditions associated with the matrix of potential outcomes under which Theorem~\ref{thm:optimal_allocation} can be further simplified.
	
	\begin{condition}[Homoscedasticity] \label{cond_homosc}
		We call an $N \times J$ matrix of potential outcomes \emph{homoscedastic} if each column has the same variance i.e., $S^2_j = S^2$ for $j=1, \ldots, J$. 
	\end{condition} 
	
	\begin{condition}[Strict additivity] \label{cond_str_add}
		Following \cite{DPR2015}, we call an $N \times J$ matrix of potential outcomes \emph{strictly additive} if $Y_i(j) - Y_i(\widetilde{j}) = \tau(j, \widetilde{j})$ for all $j \ne \widetilde{j} \in \{1, 2, \ldots, J\}$. Potential outcomes satisfying this condition also satisfy Condition~\ref{cond_homosc}.
	\end{condition} 
	
	The following corollary of Theorem~\ref{thm:optimal_allocation} is straightforward but useful:
	\begin{corollary} \label{cor:special1}
		If the matrix of potential outcomes satisfies Condition~\ref{cond_homosc}, then the A-, D- and E-optimal designs are all balanced designs with $p_j = 1/J$ for $j = 1, \ldots, J$.
		Further, under Condition~\ref{cond_str_add}, optimizations based on $\widetilde{\bm{V}}$ and $\mathbf{V}_{\bm \tau}$ are equivalent.
	\end{corollary}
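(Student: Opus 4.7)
The first part is a direct substitution. Under Condition~\ref{cond_homosc}, all $S_j$ equal a common $S$, and the three expressions in Theorem~\ref{thm:optimal_allocation} reduce to $p_j = 1/J$: the A-formula gives $S/(JS) = 1/J$, the D-formula is already $1/J$, and the E-formula gives $S^2/(JS^2) = 1/J$. So all three criteria yield the balanced design.

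For the second part, I first characterize the heterogeneity term in (\ref{eq:cov_matrix}) under strict additivity. Because $Y_i(j) - Y_i(\tilde j)$ is constant across $i$, every unit-level factorial effect $\tau_{ki}$ for $k = 1,\ldots,J-1$ is independent of $i$, so $\bm{\tau}_i - \bm{\tau}$ has only its first coordinate $2(\tau_{0i} - \tau_0)$ non-zero. Using $\tau_{0i} - \tau_0 = Y_i(1) - \overline{Y}(1)$ together with the homoscedasticity implied by Condition~\ref{cond_str_add}, the heterogeneity term collapses to $c\, \mathbf{e}_1 \mathbf{e}_1^{\T}$ with $c = 4S^2/N$. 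Since $c$ does not depend on $\{N_j\}$, I obtain the decomposition $\mathbf{V}_{\bm{\tau}} = \frac{1}{2^{2(K-1)}} \widetilde{\mathbf{V}} - c\,\mathbf{e}_1 \mathbf{e}_1^{\T}$.

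I then handle the three criteria separately. A-optimality is immediate: $\tr(\mathbf{V}_{\bm{\tau}})$ differs from $\tr(\widetilde{\mathbf{V}})/2^{2(K-1)}$ by the constant $c$, so the minimizers coincide. For D- and E-optimality, I plan to rotate by $\mathbf{L} = \sqrt{J}\,\mathbf{Q}$ (with orthogonal $\mathbf{Q}$) to write the quadratic form as $u^{\T}\mathbf{V}_{\bm{\tau}} u = (4S^2/J)\bigl[\sum_j v_j^2/N_j - (\sum_j v_j)^2/N\bigr]$ with $v = \mathbf{Q} u$. Cauchy--Schwarz shows this vanishes on the non-trivial direction $v_j \propto N_j$, so $\det(\mathbf{V}_{\bm{\tau}}) \equiv 0$ for every allocation and the balanced design is (trivially) D-optimal. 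For E-optimality, testing with $v = (\mathbf{e}_j - \mathbf{e}_{j'})/\sqrt{2}$ kills the second term and, combined with AM--HM and the combinatorial bound $N_{(1)} + N_{(2)} \le 2N/J$ (which holds because the $J-2$ largest $N_j$'s average at least $N/J$), gives $\lambda_{\max}(\mathbf{V}_{\bm{\tau}}) \ge 4S^2/N$, with equality at $p_j = 1/J$.

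The main obstacle is the E-optimality step. The A-case reduces to the trace identity and the D-case to the rank-deficiency observation. The E-argument requires both that the constant $c = 4S^2/N$ is tuned precisely so that $4S^2/N$ is attainable as the minimum of $\lambda_{\max}$, and the combinatorial inequality on $N_{(1)} + N_{(2)}$ to rule out any non-balanced design from achieving a lower value.
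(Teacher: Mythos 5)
Your proposal is correct, and it does substantially more than the paper, which states Corollary~\ref{cor:special1} without proof as ``straightforward.'' For the first claim the two routes coincide: set $S_j=S$ in the three formulas of Theorem~\ref{thm:optimal_allocation}. For the second claim, the paper's implicit argument is presumably that the heterogeneity term in (\ref{eq:cov_matrix}) does not depend on the allocation; that settles A-optimality (the traces differ by an additive constant) but is \emph{not} by itself enough for D- or E-optimality, since subtracting a fixed matrix does not in general preserve the minimizer of a determinant or of a maximum eigenvalue. Your proof supplies exactly the missing pieces, and I verified the computations: under Condition~\ref{cond_str_add} the only non-constant coordinate of $\bm{\tau}_i$ is $2\tau_{0i}$, giving $\mathbf{V}_{\bm\tau}=2^{-2(K-1)}\widetilde{\mathbf{V}}-(4S^2/N)\,\mathbf{e}_1\mathbf{e}_1^{\T}$; the rotated quadratic form $(4S^2/J)\bigl[\sum_j v_j^2/N_j-(\sum_j v_j)^2/N\bigr]$ is correct (using $2^{2(K-1)}=J^2/4$); Cauchy--Schwarz shows it vanishes at $v_j\propto N_j$, so $\det(\mathbf{V}_{\bm\tau})\equiv 0$; and the test vector $v=(\mathbf{e}_j-\mathbf{e}_{j'})/\sqrt{2}$ with AM--HM and $N_{(1)}+N_{(2)}\le 2N/J$ gives $\lambda_{\max}(\mathbf{V}_{\bm\tau})\ge 4S^2/N$, attained at the balanced design where $\mathbf{V}_{\bm\tau}=(4S^2/N)(\mathbf{I}_J-\mathbf{e}_1\mathbf{e}_1^{\T})$. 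Two remarks. First, your analysis actually sharpens and qualifies the stated corollary for the D-criterion: since $\mathbf{V}_{\bm\tau}$ is singular for \emph{every} allocation under strict additivity, the D-optimization based on $\mathbf{V}_{\bm\tau}$ is degenerate (all allocations attain determinant zero), so the ``equivalence'' there holds only in the weak sense that the balanced design is among the minimizers; this is a genuine observation the paper glosses over. Second, your write-up is phrased partly as a plan (``I plan to rotate\ldots''); in a final version you should state the rotation and the two inequalities as lemmas and note that $N_{(1)}+N_{(2)}\le 2N/J$ follows because the two smallest of $J$ numbers cannot both exceed their mean $N/J$ on average.
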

	
	While Theorem~\ref{thm:optimal_allocation} provides results on exact optimal designs in terms of proportions $p_j$, experimenters need integer solutions in terms of $N_j$'s satisfying $\sum_j N_j = N$. For example, while the exact D-optimal design is balanced, $N$ is not necessarily a multiple of $2^K$, and the result does not provide a D-optimal allocation of, say, 69 units to the 8 treatment combinations in a $2^3$ factorial experiment. Thus we need approximate integer solutions to the optimization problem in which additional constraints on the sample sizes assigned to specific treatment groups can also be introduced. Next, we discuss an integer programming approach to obtain such solutions.
	
	\subsection{Computation of exact optimal designs using an integer programming approach} \label{ss:computation}
	
	Many sources in the integer programming literature address constrained optimization under integer space constraints \citep[e.g. see][]{nemhauser1988, schrijver1998, Khanthesis95, Sofi2020}. We adopt the methods proposed in \cite{FRIEDRICH20151} that are designed for settings very similar to the ones we consider. \cite{FRIEDRICH20151} consider the following integer programming problem:
	\begin{equation}
		\min_{N_1, \ldots, N_J} f(N_1, \ldots, N_J) \text{ s.t. } 
		\begin{cases}
			\sum_{j=1}^{J} N_j = N, \\ 
			l_j \le N_j \le u_j, \  \forall j=1,2, \ldots, J, \\
			N_j \in \mathbb{Z}_+^J, \  \forall j=1,2, \ldots, J.
		\end{cases} \label{eq:nlin_obj}
	\end{equation}
	where,
	$\mathbb{Z}_+$ is the set of positive integers, $(l_j,u_j)$ are the lower and upper bound constraints on $N_j$ and $f:\mathbb{R}_+^J \rightarrow \mathbb{R}$ is a convex function. If $f(N_1, \ldots, N_J)$ is separable (i.e., can be expressed as $\sum_{j=1}^J f_j(N_j)$) then the greedy algorithm in Figure~\ref{fig:greedyalg} finds the globally optimal integer solution of the minimization problem given in (\ref{eq:nlin_obj}) under some regularity conditions that we show in Supplementary Material \ref{conditionsGreedy}. 
	
	From the proof of Theorem~\ref{thm:optimal_allocation} in Supplementary Material~\ref{App:ProofofResults}, it follows that the A-optimality and D-optimality criteria can respectively be expressed as $\sum_{j=1}^J (S^2_j / N_j)$ and $\sum_{j=1}^J ( \log S^2_j / N_j)$. In Supplementary Material \ref{conditionsGreedy}, we show that the conditions for global convergence of the greedy algorithm are met  and thus, convergence to the true integer optimal solution in the cases of A- and D-optimality are guaranteed if this algorithm is used. Hence, substitution of $S^2_j/N_j$ and $\log S^2_j / N_j$ for $f_j(N_j)$ in the algorithm described in Figure~\ref{fig:greedyalg} leads to optimal integer solutions for the A-optimality criterion and the D-optimality criterion, respectively. In our implementation of this algorithm, we take $l_j = 2$ for $j=1, \ldots, J$ to guarantee at least two units are assigned to each treatment combination, allowing variance estimation within each treatment group.
	
	\begin{figure}[!htbp]
		\caption{Greedy algorithm for separable functions} \label{fig:greedyalg}
		\centering
		\fbox{
			\begin{minipage}{10 cm}
				\begin{enumerate}
					\item Set $I \leftarrow \{1,...,J\}$.
					\item Let $t=0$. Initialize $N_j^{(t)} \leftarrow l_j$ for $j=1, \ldots, J$. 
					\item While ($\sum_{j=1}^J N_j^{(t)} \ne N \ \& \ I \ne \phi$) do
					\begin{itemize}
						\item for all $(j \in I)$, $\delta_j \leftarrow f_j(N_j^{(t)}+1) - f_j(N_j^{(t)})$
						\item choose $\mathcal{J} \leftarrow \argmin_{j \in I} \delta_j$
						\item $j^* \leftarrow \min \mathcal{J}$
						\item if $N_{j^*}^{(t)} + 1 \le u_{j^*}$,  \mbox{then} $N_{j^*}^{(t+1)} \leftarrow N_{j^*}^{(t)} + 1$, $t \leftarrow t + 1$\\
						\mbox{else} \  $I \leftarrow I \setminus \{j^*\}$
					\end{itemize}
					\item Return optimal $(N_1, \ldots, N_J)$.
				\end{enumerate}
			\end{minipage}
		}
	\end{figure}
	
	We provide a modified greedy algorithm described in  Figure~\ref{fig:greedyalgE} for solving the E-optimality problem since the optimization problem cannot be written in a separable form as in A- or D-optimality criterion above. In this algorithm we take $f_j(N_j) = S_j^2/N_j$ and $l_j=2$. In Section \ref{ss:computation_RBD}, the ability of this greedy algorithm to find E-optimal solutions is demonstrated empirically for blocked designs, discussed next. 
	
	\begin{figure}[!htbp]
		\caption{Greedy algorithm for E-optimality} \label{fig:greedyalgE}
		\centering
		\fbox{
			\begin{minipage}{10 cm}
				\begin{enumerate}
					\item Set $I \leftarrow \{1,...,J\}$.
					\item Let $t=0$. Initialize $N_j^{(t)} \leftarrow l_j$ for $j=1, \ldots, J$. 
					\item While ($\sum_{j=1}^J N_j^{(t)} \ne N \ \& \ I \ne \phi$) do
					\begin{itemize}
						\item choose $\mathcal{J} \leftarrow \argmax_{j \in I} f_j(N_j^{(t)})$
						\item $j^* \leftarrow \min \mathcal{J}$			
						\item if $N_{j^*}^{(t)} + 1 \le u_{j^*}$,  \mbox{then} $N_{j^*}^{(t+1)} \leftarrow N_{j^*}^{(t)} + 1$, $t \leftarrow t + 1$ \\
						\mbox{else} \  $I \leftarrow I \setminus \{j^*\}$
					\end{itemize}
					\item Return optimal $(N_1, \ldots, N_J)$.
				\end{enumerate}
			\end{minipage}
		}
	\end{figure}

	\section{Optimal allocation for factorial experiments with blocks} \label{sec:RBDfactorial}
	Consider a block-randomized $2^K$ factorial design with $H$ blocks. 
	That is, units are pre-assigned membership to one of $h$ blocks based on some similarity metric (we do not consider how to form blocks here).
	Let $M_h$ denote the size of block $h$, $h = 1, \ldots, H$. Also, let $M_{h,j}$ be the number of units in block $h$ assigned to the treatment $j$ ($j = 1, \ldots, J$). Finally, let $b_i(h)$, $i=1, \ldots, N$ be an indicator variable taking value 1 if unit $i$ belongs to block $h$ and 0 otherwise.
	Treatment assignment under a factorial RBD is equivalent to performing an independent CRD, as described in Section~\ref{sec:CRDfactorial}, within each block.
	
	The population average treatment effect ${\bm \tau}$ can be expressed as $\sum_{h} M_h {\bm \tau}_h/ N$, where ${\bm \tau}_h$ is the block-level vector of factorial effects and its estimator $\widehat{\bm \tau}$ is a weighted average of $\widehat{\bm \tau}_h$, an unbiased estimator of ${\bm \tau}_h$ defined in the same way as in (\ref{eq:est_tau}) for block $h$. Extending (\ref{eq:cov_matrix}) by noting the independence of the assignment to treatment across blocks, the covariance matrix of $\widehat{\bm \tau}_h$ in a factorial RBD can thus be obtained as
	\begin{equation}
		\bm{V}_{{\bm \tau}_h} = \cov\left(\widehat{\bm \tau}_h \right) = \frac{1}{2^{2(K-1)}} \sum_{j=1}^{2^K} \frac{1}{M_{h,j}} \widetilde{\bm \lambda_j} \widetilde{\bm \lambda_j}^{\T} S^2_{h,j} - \frac{1}{M_h(M_h-1)} \sum_{i=1}^Nb_i(h) \left( \bm{\tau}_i  - \bm{\tau}_h \right) \left( \bm{\tau}_i  - \bm{\tau}_h \right)^{\T}, \label{eq:cov_matrix_blk}
	\end{equation}
	where $\widetilde{\bm \lambda}_j$ represents the transpose of row $j$ of the model matrix $\mathbf{L}$ defined in (\ref{eq:matrixL}) and $S^2_{h,j}$ denotes the variance of all $M_h$ potential outcomes for units in block $h$ under treatment $j$ with divisor $M_h-1$. The covariance matrix of $\widehat{\bm \tau}$ can then be expressed as $\cov \left( \sum_h M_h \widehat{\bm \tau}_h /N \right)$. Because the block-level treatment estimators $\widehat{\bm \tau}_h$ are independent across blocks, we have 
	\begin{align*}
		\cov\left( \widehat{\bm \tau} \right) &= \sum_{h=1}^H \frac{M_h^2}{N^2} \bm{V}_{{\bm \tau}_h}\\
		&= \frac{1}{2^{2(K-1)}} \sum_{j=1}^{2^K}  \widetilde{\bm \lambda_j} \widetilde{\bm \lambda_j}^{\T} \left[\sum_{h=1}^H \frac{M_h^2}{N^2}\frac{S^2_{h,j}}{M_{h,j}}\right] - \sum_{h=1}^H \frac{M_h^2}{N^2}\frac{1}{M_h(M_h-1)} \sum_{i=1}^Nb_i(h) \left( \bm{\tau}_i  - \bm{\tau}_h \right) \left( \bm{\tau}_i  - \bm{\tau}_h \right)^{\T}.
	\end{align*}
	
	Writing $S^2_{\text{blk}, j} = \sum_{h=1}^H (M_h^2/N^2)(S^2_{h,j}/M_{h,j})$ and proceeding along similar lines as in Section \ref{sec:CRDfactorial}, we formulate a surrogate optimization problem to only optimize the first term, since the second term in the equation above is not identifiable. Then, choosing $M_{h,j}$'s to optimize some functional of $\cov\left( \widehat{\bm \tau} \right)$ is equivalent to optimizing a functional of the matrix
	\begin{equation}
		\widetilde{\bm{V}}_{\text{blk}} = \sum_{j=1}^{2^K}  \widetilde{\bm \lambda_j} \widetilde{\bm \lambda_j}^{\T} S^2_{\text{blk}, j} =  \bm{L}^{\T} \bm{A}_{\text{blk}} \bm{L}, 
		\label{eq:blk_matrix_criterion}
	\end{equation}
	where $\bm{A}_{\text{blk}} = \text{diag}\left( S^2_{\text{blk}, 1}, \ldots, S^2_{\text{blk}, J} \right)$.
	
	\subsection{Exact optimal designs} \label{ss:exact_RBD}
	The problem of finding an optimal design for RBDs can be formulated as minimization of an appropriate functional $\psi(\widetilde{\bm{V}}_{\text{blk}} )$ subject to the constraint $\sum_{j=1}^J M_{h,j} = M_h$ for each $h$ or equivalently as $\sum_{j=1}^J p_{h,j} = 1$ in terms of the proportions of units to be assigned to treatment combination $j$ in block $h$, $p_{h,j} = M_{h,j}/M_h$. While the A-optimality result is straightforward, finding exact D-optimal and E-optimal solutions in the setting with blocks is difficult without imposing restrictions on the potential outcomes. Before stating the optimality results, we first introduce two such restrictions that generalize Condition~\ref{cond_homosc} to a block setting.
	
	\begin{condition}[Within-block homoscedasticity, WBH] \label{cond_WBH}
		We call an $N \times J$ matrix of potential outcomes in $H$ blocks to be \emph{within-block homoscedastic (WBH)} if within each block, all treatment columns have the same variance, i.e., within block $h$, $S^2_{h,j} = S^2_{h \cdot}$ for $j=1, \ldots, J$.
	\end{condition} 
	
	\begin{condition}[Between-block homoscedasticity, BBH] \label{cond_BBH}
		We call an $N \times J$ matrix of potential outcomes in $H$ blocks to be \emph{between-block homoscedastic (BBH)} if for each treatment column $j$, the variance of potential outcomes in each block is the same, i.e., $S^2_{h,j} = S^2_{\cdot, j}$ for $h=1, \ldots, H$ and $j=1,\ldots,J$. 
	\end{condition} 
	
	The following theorem now summarizes the optimality results for blocked designs.
	
	\begin{theorem} \label{thm:optimal_allocation_blk} Let $N$ units be distributed across $H$ blocks such that there are $M_h$ units in block $h$ and $N = \sum_{h=1}^H M_h$. Let each set of $M_h$ units be allocated to $J$ treatment groups such that $p_{h,j}$ proportion of the units are allocated to treatment $j$ in block $h$, and let $S^2_{h,j}$ as defined in (\ref{eq:cov_matrix_blk}). Then, optimal allocation of $M_h$ units to the $J$ treatment groups on the basis of covariance matrix $\widetilde{\bm{V}}_{\text{blk}}$ under different optimality criteria can be summarized as follows.
		\begin{itemize}\setlength{\itemindent}{-0.7em}
			\item[(a)] The A-optimal allocation is the same as the A-optimal CRD allocation within each block, i.e., $p_{h,j} = S_{h,j}/(\sum_{j=1}^{J} S_{h,j})$ for each $h$.
			\item[(b)] If either (or both) of Conditions \ref{cond_WBH} (WBH) and \ref{cond_BBH} (BBH) hold, the D-optimal allocation is the balanced assignment within each block, i.e., $p_{h,j} = 1/J$ for each $h$.
			\item[(c)] If Condition~\ref{cond_WBH} (WBH) holds, the E-optimal allocation is the balanced design within each block, i.e., $p_{h,j} = 1/J$ for each $h$.
		\end{itemize}
	\end{theorem}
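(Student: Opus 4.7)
The plan is to exploit the block decomposition
\[
\widetilde{\bm{V}}_{\text{blk}} = \sum_{h=1}^H \frac{M_h^2}{N^2}\,\widetilde{\bm{V}}_h,
\qquad
\widetilde{\bm{V}}_h = \bm{L}^{\T}\,\text{diag}\!\left(\tfrac{S^2_{h,1}}{M_{h,1}},\ldots,\tfrac{S^2_{h,J}}{M_{h,J}}\right)\bm{L},
\]
which follows from (\ref{eq:blk_matrix_criterion}) and the definition of $S^2_{\text{blk},j}$. Since $\bm{L}^{\T}\bm{L} = 2^{K-1}\bm{I}_J$, we have $\text{tr}(\widetilde{\bm{V}}_h) = 2^{K-1}\sum_j S^2_{h,j}/M_{h,j}$ and $\det(\widetilde{\bm{V}}_h) = (2^{K-1})^J \prod_j (S^2_{h,j}/M_{h,j})$, and the constraint $\sum_j M_{h,j} = M_h$ couples variables only within a single block. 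I would therefore reduce each criterion to the within-block CRD problem already handled by Theorem~\ref{thm:optimal_allocation}, combined with an appropriate cross-block inequality.

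For part (a), the trace decomposes additively across blocks,
\[
\text{tr}(\widetilde{\bm{V}}_{\text{blk}}) = 2^{K-1}\sum_{h=1}^H \frac{M_h^2}{N^2}\sum_{j=1}^J \frac{S^2_{h,j}}{M_{h,j}},
\]
so the inner sum can be minimized block-by-block subject to $\sum_j M_{h,j} = M_h$. This is exactly Theorem~\ref{thm:optimal_allocation}(a) applied to block $h$, whose solution is $p_{h,j} = S_{h,j}/\sum_{j'}S_{h,j'}$.

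For part (b), the key tool will be Minkowski's determinant inequality for positive semi-definite matrices, applied to the block decomposition:
\[
\bigl(\det \widetilde{\bm{V}}_{\text{blk}}\bigr)^{1/J} \;\ge\; \sum_{h=1}^H \frac{M_h^2}{N^2}\,\bigl(\det \widetilde{\bm{V}}_h\bigr)^{1/J}.
\]
Substituting $(\det\widetilde{\bm{V}}_h)^{1/J} = 2^{K-1}(\prod_j S^2_{h,j})^{1/J}/(\prod_j M_{h,j})^{1/J}$ and invoking AM--GM in the form $\prod_j M_{h,j}\le (M_h/J)^J$ produces a global lower bound on $\det \widetilde{\bm{V}}_{\text{blk}}$, with the AM--GM step tight exactly at $M_{h,j}=M_h/J$ for every $(h,j)$. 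Minkowski's inequality itself is tight iff the $\widetilde{\bm{V}}_h$ are pairwise proportional. Under Condition~\ref{cond_WBH} (WBH) and the balanced allocation, $\widetilde{\bm{V}}_h = 2^{K-1}S^2_{h\cdot}(J/M_h)\bm{I}_J$; under Condition~\ref{cond_BBH} (BBH) and the balanced allocation, $\widetilde{\bm{V}}_h = (J/M_h)\bm{L}^{\T}\,\text{diag}(S^2_{\cdot,1},\ldots,S^2_{\cdot,J})\bm{L}$; in either case the block matrices are scalar multiples of a common positive definite matrix, so the Minkowski step is tight as well. Hence the balanced within-block allocation attains the lower bound and is D-optimal.

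For part (c), under WBH I would chain the elementary spectral bound $\lambda_{\max}(\widetilde{\bm{V}}_{\text{blk}}) \ge \text{tr}(\widetilde{\bm{V}}_{\text{blk}})/J$ with the AM--HM inequality $\sum_j 1/M_{h,j} \ge J^2/M_h$ to obtain
\[
\lambda_{\max}(\widetilde{\bm{V}}_{\text{blk}}) \;\ge\; \frac{2^{K-1}J}{N^2}\sum_{h=1}^H M_h S^2_{h\cdot}.
\]
At the balanced within-block allocation the WBH structure collapses $\widetilde{\bm{V}}_{\text{blk}}$ into $2^{K-1}(J/N^2)\bigl(\sum_h M_h S^2_{h\cdot}\bigr)\bm{I}_J$, so both inequalities become equalities simultaneously and the lower bound is achieved. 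The main obstacle in the entire argument is the equality analysis in Minkowski's inequality: it is what forces the reliance on WBH or BBH, since without such a structural restriction on $\{S^2_{h,j}\}$ the block covariances $\widetilde{\bm{V}}_h$ cannot in general be made pairwise proportional and the determinant/spectral lower bound ceases to be attained at the balanced design.
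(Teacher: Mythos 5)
Your proposal is correct, but for parts (b) and (c) it takes a genuinely different route from the paper. Part (a) coincides with the paper's argument: the trace separates additively over blocks, so each block reduces to the CRD problem of Theorem~\ref{thm:optimal_allocation}(a) (the paper carries this out explicitly with Lagrange multipliers). For part (b), the paper uses the eigenvalue identity of Lemma~\ref{lem:eigen_Vtilde_blk} to write $\det\widetilde{\bm{V}}_{\text{blk}}=J^J\prod_j S^2_{\text{blk},j}$, minimizes $\sum_j\log S^2_{\text{blk},j}$ by Lagrange multipliers, and verifies the first-order conditions separately under WBH and BBH; your Minkowski-plus-AM--GM chain instead produces an allocation-free lower bound on the determinant that the balanced design attains, which certifies \emph{global} optimality directly (the paper's stationarity computation implicitly relies on convexity of the objective to reach the same conclusion) and makes transparent why WBH or BBH must enter: they are precisely what renders the block matrices $\widetilde{\bm{V}}_h$ pairwise proportional at the balanced design so that Minkowski is tight. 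For part (c), the paper proves two bespoke lemmas (the perturbation inequality of Lemma~\ref{lem:Eoptalg1} and the minimax allocation result of Lemma~\ref{lem:min_blockconstraints}, established by contradiction) whose combined content is essentially your chain $\lambda_{\max}\ge\tr(\widetilde{\bm{V}}_{\text{blk}})/J$ followed by AM--HM; your version is shorter, and since the balanced design under WBH makes $\widetilde{\bm{V}}_{\text{blk}}$ a multiple of the identity, both inequalities are visibly tight simultaneously. One immaterial slip: because $\bm{L}$ has $\pm1$ entries, $\bm{L}^{\T}\bm{L}=J\,\bm{I}_J$ rather than $2^{K-1}\bm{I}_J$ (the paper's own Lemma~\ref{lem:eigen_Vtilde} uses the factor $J$), so the constants $2^{K-1}$ in your traces, determinants, and final bounds should read $J=2^K$; this affects no argmin and no conclusion.
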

	
	\medskip
	
	\begin{remark} \label{remark:str_add_blk1}
		Condition~\ref{cond_str_add} for all $N$ units implies both WBH and BBH. Consequently, by Theorem~\ref{thm:optimal_allocation_blk}, the D- and E-optimal allocation for strictly additive potential outcomes in randomized block designs is a balanced assignment within each block.
	\end{remark}
	
	\begin{remark} \label{remark:str_add_blk2}
		The A-optimal allocation under WBH is balanced allocation within each block (same as D- and E-optimal allocations). However, under BBH, the A-optimal allocation is different from the D- and E-optimal allocations, and is proportional to the standard deviations of the treatments $S_{\cdot, j}$ that is constant across blocks.
	\end{remark}
	
	\subsection{Computation of exact optimal designs for factorial RBDS using an integer programming approach} \label{ss:computation_RBD}
	
	As in the case of completely randomized factorial designs, whereas Theorem~\ref{thm:optimal_allocation_blk} provides results on exact optimal designs in terms of proportions $p_{h,j}$, experimenters  need integer solutions in terms of $M_{h,j}$'s in which additional constraints on the sample sizes assigned to specific treatment groups can also be introduced. Further, Theorem~\ref{thm:optimal_allocation_blk} provides D- and E-optimal solutions only under specific conditions like WBH and BBH. Thus, we discuss an integer programming approach to obtain integer solutions for settings covered and not covered by Theorem~\ref{thm:optimal_allocation_blk}.
	
	We can use the same algorithm in Figure~\ref{fig:greedyalg} within each block to obtain the optimal integer solutions for  A-optimality under the RBD by replacing the function $f_j(.)$ by $f_{h,j}(M_{h,j})= (M_h^2/N^2)(S_{h,j}^2/M_{h,j})$. For D- and E-optimality, we extend the greedy idea from  the algorithm in Figure~\ref{fig:greedyalgE} with minor changes to the function $f_j(.)$. We take $f_{h,j}(M_{h,j})= \log(\sum_{h=1}^H(M_h^2/N^2)(S_{h,j}^2/M_{h,j}))$ for D-optimality and $f_{h,j}(M_{h,j})= (M_h^2/N^2)(S_{h,j}^2/M_{h,j})$ for E-optimality.
	The exact algorithms taking the structure of the blocks into account are given in Supplementary Material~\ref{append:greed_alg}.
	The main difference between the algorithm used in Section \ref{sec:CRDfactorial} and the one proposed here lies in the fact that now we have to allocate the next best unit at iteration $t$ over an $H \times J$ matrix $((M_{h,j}^{(t)}))$ with upper bounds on row sums, rather than a vector $(N_1^{(t)},\dots,N_J^{(t)}$) with an upper bound on the sum of the elements, as in the case of the CRD. Note that, by \cite{FRIEDRICH20151}, the greedy algorithm finds the correct solution in the case of A-optimality for the block design. It, however, does not extend to the D- and E-optimality in the block case, due to nature of the objective functions.
	
	We now conduct an empirical exploration of the performances of the greedy algorithm in terms of its ability to find E-optimal solutions. Five different settings of $2^2$ factorial designs in two blocks, each corresponding to a specific type of potential outcome matrix, are considered. Each setting is defined by the block sizes $M_1$ and $M_2$, and the $4 \times 2$ matrix of variances $S^2_{h,j}$ as shown in columns 3 and 4 of Table~\ref{tbl:E-opt_GreedySim}, respectively.
	
	The first setting considers blocks of equal sizes with potential outcomes satisfying Condition~\ref{cond_str_add} (strict additivity), leading to an E-optimal design that is balanced within each block as per Remark~\ref{remark:str_add_blk1}. The second setting considers equal block sizes with potential outcomes satisfying Condition~\ref{cond_WBH} (WBH), leading to a balanced design by Theorem~\ref{thm:optimal_allocation_blk}.  In this setting and the previous one, the exact optimal designs provide optimal integer solutions. This is not the case in the third setting, which considers unequal block sizes with potential outcomes satisfying Condition~\ref{cond_BBH} (BBH) but not Condition~\ref{cond_WBH} (WBH). Theorem~\ref{thm:optimal_allocation_blk} does not apply directly for E-optimality, but the greedy algorithm identifies the unique true E-optimal allocation determined by the exhaustive search. The fourth setting is similar the third case above, but is one where the exhaustive search provides multiple solutions, identifying four different allocations, each of which is optimal. In this case, Theorem~\ref{thm:optimal_allocation_blk} does not apply directly and the greedy algorithm identifies one of these solutions. The fifth setting neither satisfies Condition~\ref{cond_WBH} (WBH) nor Condition~\ref{cond_BBH} (BBH), and consequently Theorem~\ref{thm:optimal_allocation_blk} cannot provide an exact E-optimal solution. However, the greedy algorithm identifies one of the two (identified through exhaustive search) true optimal integer allocations. A quick note on our greedy algorithm is that, due to the nature of the algorithm in Figures~\ref{fig:greedyalgBlk} and \ref{fig:greedyalgEblk} of Supplementary Material~\ref{append:greed_alg}, ties are broken deterministically using minimum index when the greedy step returns more than one solution. Thus, our greedy solutions will always achieve the same solution for a given set of inputs without regard for the plurality of solutions (such as the ones in the fourth and fifth setting above).
	
	A similar exploration performed for the D-optimal allocation (shown in Supplementary Material~\ref{App: EmpEvidenceGreedyBlk}) provides evidence that the greedy algorithm can identify the true optimal integer solution when it is unique, and \emph{one} of the true optimal solutions when multiple optimal integer solutions exist.
	
	\begin{table}[ht]
		\centering \footnotesize
		\begin{tabular}{c|m{4cm}|m{1cm}|m{2.5cm}|m{3.5cm}|m{3.5cm}}
			S.No. & Case & Block Size ($M_h$) & Variances ($S_{h,j}^2$)  & Exhaustive search \newline E-optimal solution & Greedy solution \\[3ex]
			\hline & & & & & \\
			1. & Equal blocks with equal variances & 
			$\begin{bmatrix} 40 \\40 \end{bmatrix}$ & 
			$\begin{bmatrix} 1 & 1 & 1 & 1 \\ 1 & 1 & 1 & 1 \end{bmatrix}$ &  $\begin{bmatrix} 10 & 10 & 10 & 10 \\ 10& 10 & 10 & 10 \end{bmatrix} $ & $\begin{bmatrix} 10 & 10 & 10 & 10 \\ 10& 10 & 10 & 10 \end{bmatrix}$ \\ 
			& & & & & \\
			2. & Equal blocks with equal variances for all treatments within block & 
			$\begin{bmatrix} 40 \\40 \end{bmatrix}$ & 
			$\begin{bmatrix} 4 & 4 & 4 & 4 \\ 1 & 1 & 1 & 1 \end{bmatrix}$ &  $\begin{bmatrix} 10 & 10 & 10 & 10 \\ 10& 10 & 10 & 10 \end{bmatrix} $ & $\begin{bmatrix} 10 & 10 & 10 & 10 \\ 10& 10 & 10 & 10 \end{bmatrix}$ \\
			& & & & & \\
			3. & Unequal blocks with equal variances across blocks for each treatment & 
			$\begin{bmatrix} 40 \\20 \end{bmatrix}$ & 
			$\begin{bmatrix} 1 & 2 & 3 & 4 \\ 1 & 2 & 3 & 4 \end{bmatrix}$ &  $\begin{bmatrix} 4 & 8 & 12 & 16 \\ 2 & 4 & 6 & 8 \end{bmatrix} $ & $\begin{bmatrix} 4 & 8 & 12 & 16 \\ 2 & 4 & 6 & 8 \end{bmatrix}$ \\
			& & & & & \\
			4. & Unequal blocks with equal variances but exact solution is non-integer & 
			$\begin{bmatrix} 40 \\20 \end{bmatrix}$ & 
			$\begin{bmatrix} 1 & 2 & 3 & 5 \\ 1 & 2 & 3 & 5 \end{bmatrix}$ &  $\begin{bmatrix} 
				\begin{bmatrix} 4 & 8 & 11 & 17\\2& 3 & 5 & 10 \end{bmatrix} \\[1em]
				\begin{bmatrix} 4 & 7 & 11 & 18\\2& 4 & 5 & 9 \end{bmatrix}\\[1em]
				\begin{bmatrix} 3 & 8 & 11 & 18\\3& 3 & 5 & 9 \end{bmatrix}\\[1em]
				\begin{bmatrix} 3 & 7 & 11 & 19\\3& 4 & 5 & 8\end{bmatrix}
			\end{bmatrix}$ & 
			$\begin{bmatrix} 4 & 7 & 11 & 18\\2& 4 & 5 & 9 \end{bmatrix}$ \\
			& & & & & \\
			5. & Equal blocks with unequal variances & 
			$\begin{bmatrix} 40 \\40 \end{bmatrix}$ & 
			$\begin{bmatrix} 1 & 2 & 3 & 4 \\ 4& 3 & 2& 1 \end{bmatrix}$ &  $\begin{bmatrix} 
				\begin{bmatrix} 6 & 10 & 11 & 13 \\ 13 & 11 & 10 & 6 \end{bmatrix} \\[1em]
				\begin{bmatrix} 6 & 9 & 12 & 13 \\ 13 & 12 & 9 & 6 \end{bmatrix}
			\end{bmatrix} $ & $\begin{bmatrix} 6 & 9 & 12 & 13 \\ 13 & 12 & 9 & 6 \end{bmatrix}$ 
		\end{tabular}
		\caption{Summary of Greedy algorithm solutions for E-optimality for $H=2, K=2$}
		\label{tbl:E-opt_GreedySim}
	\end{table}
	
	\section{Optimal allocation driven by cost constraints} 
	\label{sec:costbased_optimal_alloc}
	
	So far, we have considered optimality criteria that are based on the covariance matrix of the estimated factorial effects, implicitly assuming that  all treatment combinations are equally expensive (with respect to cost and/or time). However, such assumptions may not be true in many practical situations and cost constraints can play an important role in determining optimal allocation. Thus it is worthwhile to explore solutions to optimal allocation under cost constraints.
	
	We consider the optimal allocation for $2^K$ factorial CRDs. Let the cost of assigning treatment combination $j$ to one unit be $C_j>0$, and the total available budget be $C$. In the new optimization problem, we replace the constraint $\sum_j N_j = N$ in the original problem described in Section \ref{ss:exact_CRD} by the cost constraint $\sum_j C_j N_j \le C$. The new optimization problem is therefore: 
	\begin{eqnarray}\label{eq:cost_problem}
		\min_{N_j}  \psi\left( \widetilde{\mathbf{V}} \right) \text{subject to } \sum_j C_jN_j \le C,
	\end{eqnarray}
	where $\widetilde{\mathbf{V}} =\sum_{j=1}^{J} \frac{S^2_j}{N_j} \widetilde{\bm  \lambda}_j \widetilde{\bm \lambda_j}^{\T}$ and $\psi(\widetilde{\mathbf{V}})$ is a functional of $\widetilde{\mathbf{V}}$. A straightforward approach to incorporate this new constraint into our previous setting is to re-write the constraint as $\sum_j \widetilde N_j \le C$ where $\widetilde N_j = N_jC_j$ is the total cost for the suggested allocation to treatment arm $j$. Under this one-to-one transformation $\tilde N_j = C_jN_j$, the  optimization problem in (\ref{eq:cost_problem}) is equivalent to minimizing the objective function over $\widetilde N_j$ \citep{boyd2004convex}, and can be written as:
	\begin{eqnarray*}
		\min_{\tilde N_j}  \psi\left( \widetilde{\mathbf{V}} \right) &=& \min_{\tilde N_j}  \psi \left( \sum_{j=1}^{J} \frac{S^2_j}{N_j} \widetilde{\bm  \lambda}_j \widetilde{\bm \lambda_j}^{\T} \right) = \min_{\tilde N_j} \psi \left(   \sum_{j=1}^{J} \frac{S^2_j}{\tilde N_j/C_j} \widetilde{\bm  \lambda}_j \widetilde{\bm \lambda_j}^{\T} \right), \label{eq:cost_opt1} \\
		&& \mbox{subject to} \ \sum_j \tilde{N}_j \le C. \nonumber
	\end{eqnarray*}
	
	Because the optimal solution of the above optimization problem is attained at $\sum_j \widetilde{N}_j = C$, the inequality constraint can be replaced by the equality constraint.   Then, proceeding along the lines of Theorem~\ref{thm:optimal_allocation}, one can obtain the cost for the optimal allocation to treatment arm $j$ as $\widetilde{N}_j \propto S_j \sqrt(C_j)$, $\tilde N_j = C/J$ and $\widetilde{N}_j \propto S_j^2 C_j$ as the A-, D- and E- optimal solutions. These results are formalized in terms of the optimal proportion of the budget allocated to each treatment arm, which can be used to determine the number of units to assign to each treatment arm, in the theorem below.
	
	\begin{theorem} \label{thm:cost_opt_allocation}
		Let $C$ be total budget for the whole experiment and the cost of allocating one experimental unit to treatment $j$ be  $C_j>0$. Let $\pi_j = C_jN_j/C$ denote the proportion of the total budget assigned to treatment $j$ with $\sum_j \pi_j \le 1$.
		Then, the
		\begin{itemize}
			\item[(a)] A-optimal cost-based allocation to the $J$ treatment groups on the basis of covariance matrix $\widetilde{\bm{V}}$ is $\pi_j =(S_j \sqrt C_j)/(\sum_j S_j \sqrt C_j)$.
			\item[(b)] D-optimal cost-based allocation to the $J$ treatment groups on the basis of covariance matrix $\widetilde{\bm{V}}$ is $\pi_j = 1/J$.
			\item[(c)] E-optimal cost-based allocation to the $J$ treatment groups on the basis of covariance matrix $\widetilde{\bm{V}}$ is $\pi_j = (S_j^2C_j)/(\sum_j S_j^2 C_j)$. 
		\end{itemize}
	\end{theorem}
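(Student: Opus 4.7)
The plan is to reduce Theorem~\ref{thm:cost_opt_allocation} to Theorem~\ref{thm:optimal_allocation} by the one-to-one reparameterization $\widetilde{N}_j = C_j N_j$ that is already flagged in the paragraph preceding the statement. Under this change of variables the matrix $\widetilde{\mathbf{V}}$ becomes
\begin{equation*}
\widetilde{\mathbf{V}} \;=\; \sum_{j=1}^{J} \frac{S_j^2}{N_j}\,\widetilde{\bm\lambda}_j\widetilde{\bm\lambda}_j^{\T} \;=\; \sum_{j=1}^{J} \frac{S_j^2 C_j}{\widetilde{N}_j}\,\widetilde{\bm\lambda}_j\widetilde{\bm\lambda}_j^{\T},
\end{equation*}
so in the $\widetilde{N}_j$ coordinates the cost-constrained problem has exactly the same structure as the one solved in Theorem~\ref{thm:optimal_allocation}, with $S_j^2$ replaced by the effective variance $S_j^2 C_j$ and the sample-size budget $N$ replaced by the cost budget $C$.

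The first step I would carry out is to justify replacing the inequality $\sum_j C_j N_j \le C$ by the equality $\sum_j \widetilde{N}_j = C$. Each of the three functionals $\psi(\widetilde{\mathbf{V}})$ (determinant, trace, maximum eigenvalue) is strictly decreasing in each $\widetilde{N}_j$ on $\mathbb{R}_+^J$, since increasing $\widetilde{N}_j$ shrinks the diagonal matrix $\mathrm{diag}(S_j^2 C_j/\widetilde{N}_j)$ in the Loewner order and the map $\mathbf{A}\mapsto \mathbf{L}^{\T}\mathbf{A}\mathbf{L}$ preserves this ordering. Hence any slack in the budget can be spent to strictly improve the objective, and the optimum lies on the boundary $\sum_j \widetilde{N}_j = C$.

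The second step is to invoke Theorem~\ref{thm:optimal_allocation} in the transformed problem. Writing $\widetilde{p}_j = \widetilde{N}_j/C$, part (a) of that theorem applied with variances $S_j^2 C_j$ gives $\widetilde{p}_j \propto \sqrt{S_j^2 C_j} = S_j\sqrt{C_j}$, part (b) gives $\widetilde{p}_j = 1/J$, and part (c) gives $\widetilde{p}_j \propto S_j^2 C_j$. Because $\pi_j = C_j N_j / C = \widetilde{N}_j/C = \widetilde{p}_j$ by definition, these three statements are precisely parts (a), (b) and (c) of Theorem~\ref{thm:cost_opt_allocation}.

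The only subtlety, and the step I expect to require the most care, is verifying that the reparameterization is a genuine bijection between the feasible sets so that minimizing over $N_j \in \mathbb{R}_+$ and over $\widetilde{N}_j \in \mathbb{R}_+$ give the same optimum, and that the objective $\psi(\widetilde{\mathbf{V}})$ is invariant under this change of variables in the sense used above. Since each $C_j>0$, the map $N_j \mapsto \widetilde{N}_j = C_j N_j$ is a positive scaling, the feasible region $\{N_j>0 : \sum_j C_j N_j \le C\}$ is carried bijectively onto $\{\widetilde{N}_j>0 : \sum_j \widetilde{N}_j \le C\}$, and the objective expressed in the new variables has the claimed form. With this bijection in hand, the three optimality results transfer verbatim from Theorem~\ref{thm:optimal_allocation}, completing the proof.
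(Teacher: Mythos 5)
Your proposal is correct and follows essentially the same route the paper takes: the substitution $\widetilde{N}_j = C_j N_j$ turning the objective into $\sum_j (S_j^2 C_j/\widetilde{N}_j)\widetilde{\bm\lambda}_j\widetilde{\bm\lambda}_j^{\T}$, the observation that monotonicity forces the optimum onto the boundary $\sum_j \widetilde{N}_j = C$, and then a direct appeal to Theorem~\ref{thm:optimal_allocation} with $S_j^2$ replaced by $S_j^2 C_j$ and $N$ by $C$. The extra care you take with the Loewner-order monotonicity and the bijection of feasible sets only fills in details the paper leaves implicit.
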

	
	\begin{remark} \label{rem:cost_block}
		Theorem~\ref{thm:cost_opt_allocation} can be extended to the case of block designs along the lines of Theorem~\ref{thm:optimal_allocation_blk}.
	\end{remark}
	
	\begin{remark} \label{rem:thms1_3}
		If for $j=1, \ldots, J$, the costs $C_j$ in Theorem~\ref{thm:cost_opt_allocation} are the same and equal to $C_0$, then the constraint $\sum_j C_j N_j \le C$ reduces to $\sum_j N_j \le N$ where $N = C/C_0$. Also, $\pi_j = C_0 N_j/C \equiv p_j$. Thus the optimization problem becomes the same as the one in Theorem \ref{thm:optimal_allocation}, making it a special case of Theorem \ref{thm:cost_opt_allocation}.
	\end{remark}
	
	We use an example to demonstrate the applicability of Theorem \ref{thm:cost_opt_allocation}. Consider a $2^2$ factorial design, with $C=100$, per unit cost vector $(C_1, \ldots, C_4) = (0.1,4,4,9)$. This set up represents many common scenarios where treatment arm 1 represents the control group 00 and involves a per-unit cost that is negligible compared to the ones with at least one active treatment. On the other hand, treatment arm 4 has both treatments at active level and involves the highest cost. Table \ref{tab:thm3demo} shows the A-, D- and E- optimal proportions of total cost $\pi_j$'s for two different vectors of variances $(S_1^2, \ldots, S_4^2)$. In one setting, we take the vector as $(1,1,1,1)$ and in another, set it to $(1,2,3,4)$. For the sake of completeness, we also add a column of equal cost $(1,1,1,1)$, under which the $p_j$'s of Theorem \ref{thm:optimal_allocation} and $\pi_j$'s of Theorem \ref{thm:cost_opt_allocation} become identical, as explained in Remark \ref{rem:thms1_3}. Thus, the optimal allocations in the first column of Table \ref{tab:thm3demo} can also be derived from Theorem \ref{thm:optimal_allocation} with $N = C = 100$.
	
	\begin{table}[!htbp]
		\centering \small
		\caption{Optimal $\pi_j$'s under cost constraints obtained from Theorem \ref{thm:cost_opt_allocation} } \label{tab:thm3demo}
		\begin{tabular}{|c|c|c|c|} \hline
			Variance & Type of &  \multicolumn{2}{c|}{Cost vector $(C_1, \ldots, C_4)$} \\ 
			Vector & Optimality &   $(1,1,1,1)$ & $(0.1,4,4,9)$  \\ \hline \hline
			& A &  (0.250,0.250,0.250,0.250) & (0.043,0.273,0.273,0.410) \\ \cline{2-4}
			(1,1,1,1) & D  & (0.250,0.250,0.250,0.250) & (0.250,0.250,0.250,0.250) \\ \cline{2-4}
			& E &  (0.250,0.250,0.250,0.250) & (0.006,0.234,0.234,0.526)  \\ \hline \hline
			& A &   (0.163,0.230,0.282,0.325) & (0.025,0.224.0.275,0.476) \\ \cline{2-4}
			(1,2,3,4) & D  & (0.250,0.250,0.250,0.250) & (0.250,0.250,0.250,0.250) \\ \cline{2-4}
			& E  & (0.100,0.200,0.300,0.400) & (0.002,0.143,0.214,0.642)  \\ \hline
		\end{tabular}
	\end{table}
	
	One can obtain the number of units $N_j$'s the A-, D- and E-optimal allocations of $N_j$'s by substituting the optimal $\pi_j$s from Theorem \ref{thm:cost_opt_allocation} into $N_j = (C \pi_j)/C_j$. However, rounding these optimal $N_j$'s into nearest integers may lead to violation of the constraint $\sum_j C_j N_j \le C$. To avoid such possibilities, one can consider the optimal values of $\lfloor{(C \pi_j)/C_j} \rfloor$ as approximate integer solutions, where $\lfloor{x}\rfloor$ denotes the largest integer contained in $x$.
	
	Researchers may decide to impose an additional constraint on the optimization problem (\ref{eq:cost_opt1}) that forces the sum of $N_j$'s to be exactly equal to a predetermined $N$. Such a problem would give optimal allocation under fixed $N$, unlike Theorem~\ref{thm:cost_opt_allocation}. However, imposing this additional constraint may force the set of feasible solutions to the optimization problem to be empty. For example, suppose for all $j$, $C_j > C/N$. Then, $\sum_j C_jN_j>\sum_j (C/N)N_j$, which exceeds the allowable cost $C$ if the restriction $\sum_j N_j = N$ is imposed. Thus, additional conditions are necessary to guarantee that the feasible set is non-empty. Obtaining closed-form solutions under such conditions may not be straightforward and one may need to rely on numerical methods to obtain such solutions.
	
	\section{Applications in real experiments} \label{sec:examples}
	
	In this section, we demonstrate applications of the results and algorithms developed to two real-life experiments. First, we re-visit the education example from \cite{Angrist2009} described in Section \ref{sec:intro}. Second, we discuss a pilot audit experiment reported in \cite{Libgober2020} conducted to identify how perceptions of race, gender and affluence affect access to lawyers, and demonstrate how the proposed methodology can be used to design follow-up experiments in similar populations.
	
	\subsection{Education experiment} \label{sec:example_ed}
	
	In the experiment described in \cite{Angrist2009}, the authors use a CRD to allocate the $N=1656$ units to the $2^2$ treatments. Theorem \ref{thm:optimal_allocation} can directly inform us of the optimal allocation without costs, but there is more structure that we can exploit. For instance, there are potentially two blocks of experimental units or subjects representing female (block 1) and male (block 2) students, with block sizes $M_1 = 948$ and $M_2 = 708$, which can be used to improve their design. Theorem~\ref{thm:optimal_allocation_blk} will give us the optimal designs in this case. 
	
	Assuming that there is no prior information about the variances of potential outcomes (GPAs after year 1), we assume that the variances are equal within and across blocks (Conditions~\ref{cond_WBH} and~\ref{cond_BBH}). Then, optimal allocations under both CRD and RBD, from Theorem~\ref{thm:optimal_allocation} and Theorem~\ref{thm:optimal_allocation_blk} respectively, are shown in Tables~\ref{tab:Angrist_Results1a} and~\ref{tab:Angrist_Results1b}.

	\begin{table}[!htbp]
		\centering \small
		\caption{A-, D- \& E-optimal allocations under CRD assuming Condition~\ref{cond_homosc}} \label{tab:Angrist_Results1a}
		\begin{tabular}{|c|c|c|c|c|} \hline
			& \multicolumn{4}{c|}{Treatment combination} \\ 
			& $00$ & $01$ & $10$ & $11$ \\ \hline 
			$N$= 1656 & 414 & 414 & 414 & 414 \\ \hline
		\end{tabular}
	\end{table}
	
	\begin{table}[!htbp]
		\centering \small
		\caption{A-, D- \& E-optimal allocations under RBD assuming Conditions~\ref{cond_WBH} \&~\ref{cond_BBH}} \label{tab:Angrist_Results1b}
		\begin{tabular}{|c|c|c|c|c|c|} \hline
			& Block& \multicolumn{4}{c|}{Treatment combination} \\ 
			Block&size  & $00$ & $01$ & $10$ & $11$ \\ \hline 
			1 &$M_1=$ 948& 237 & 237 & 237 & 237 \\
			2 &$M_2=$ 708&  177 & 177 & 177 & 177 \\ \hline
		\end{tabular}
	\end{table}

	Now let us consider a hypothetical situation where the number of units $N$ is not prespecified, but there is a budget constraint that depends on the costs associated with the four treatment combinations in this experiment. The treatment combinations $01$ (SFP but not SSP) and $10$ (SSP but not SFP), each involve cost associated with one of two programs. \cite{Angrist2009} report about \$5,000 for individual students that were allocated to treatment $10$ (SSP). Per unit cost for treatment combination $01$(SFP) is not mentioned but if we assume a similar cost as with SSP, then, we can infer that the cost to allocate a student to the treatment combination $11$ (SFSP) would be the sum total of the individual costs (\$10,000). The control, representing the treatment combination $00$, is possibly the cheapest to allocate units to, because it would involve only administrative cost, which we assume to be \$500. Then under the original allocation (1106, 250, 250, 150) in the actual experiment as shown in Table \ref{tab:Angrist}, the cost of the experiment would be approximately \$4.5 million. Assuming this amount to be our budget constraint $C$, the A-, D- and E-optimal allocations for two different variance vectors obtained from Theorem \ref{thm:cost_opt_allocation} are shown in Table \ref{tab:Angrist_Results2}. The first row shows the allocation of the proportions $\pi_j$'s of the total budget to the four treatment arms, and the second shows the corresponding approximate integer solution for $N_j$ as $\lfloor{C \pi_j/C_j}\rfloor$.
	
	\begin{table}[!htbp]
		\centering \small
		\caption{Optimal allocations ($\pi_j$ and $N_j = \lfloor{C \pi_j/C_j}\rfloor$) with cost vector (500,5000,5000,10000) and total budget of $C=\$4.5$ million} \label{tab:Angrist_Results2}
		\begin{tabular}{|c|c|c|c|c|c|c|} \hline
			Variance vector & Type of & & \multicolumn{4}{c|}{Treatment combination} \\ 
			$(S_1^2, \ldots, S_4^2)$ & Optimality & &  $00$ & $01$ & $10$ & $11$ \\ \hline 
			&A & $\pi_j$& 0.085 & 0.268 & 0.268 & 0.379 \\ 
			\cline{3-7}
			&&$N_j$  & 762 & 241 & 241 & 170 \\
			\cline{2-7}
			(1,1,1,1) & D &$\pi_j$  & 0.25 & 0.25 & 0.25 & 0.25 \\ 
			\cline{3-7}
			&&$N_j$  & 2250 & 225 & 225 & 112 \\
			\cline{2-7}
			& E & $\pi_j$& 0.024 & 0.244 & 0.244 &0.488
			\\
			\cline{3-7}
			&&$N_j$  & 219 & 219 & 219 & 219 \\ \hline
			&A & $\pi_j$&  0.062 & 0.275 & 0.275 & 0.389 \\ 
			\cline{3-7} 
			&&$N_j$  & 553 & 247 & 247 & 174 \\
			\cline{2-7}
			(1,2,2,2) & D &$\pi_j$ & 0.25 & 0.25 & 0.25 & 0.25 \\ 
			\cline{3-7}
			&&$N_j$  & 2250 & 225 & 225 & 112 \\ \cline{2-7}
			& E &$\pi_j$ & 0.012 & 0.245 & 0.245 & 0.494 \\
			\cline{3-7}
			&&$N_j$  & 111 & 222 & 222 & 222 \\ \hline
		\end{tabular}
	\end{table}
	
	\subsection{Audit experiment}
	
	\cite{Libgober2020} reported an audit study in which the experimental units were 96 lawyers randomly selected from lawyers in California with a certification in criminal law.  Each lawyer in the experiment received an email about a routine `driving under influence' (DUI) case (a very common criminal matter). The email template suggested that the person sending the email was (i) either white or black (with a racially distinctive name being used to influence perceived race), (ii) either female or male (again cued via the email sender's name), and (iii) either relatively affluent or relatively lower-income description of client's earnings. Thus, this experiment had a $2^3$ factorial structure. The response was recorded as a binary outcome taking value 1 if there was a response to the email and 0 otherwise. The experiment was replicated with 96 additional lawyers after a certain period of time. The estimated variances $s^2_j$ for $j=1, \ldots, 8$ treatment groups for the individual replicates and their pooled values are shown in Table~\ref{tab:Libgober_varest}.
	
	\begin{table}[htbp]
		\centering \small
		\caption{Estimated variances for different treatment groups} \label{tab:Libgober_varest}
		\begin{tabular}{|c|c|c|c|c|c|c|c|c|} \hline
			Experiment & $000$ & $001$ & $010$ & $011$ & $100$ & $101$ & $110$ & $111$ \\ \hline
			Replicate I & 0.15 & 0.15 & 0.15 & 0.20 & 0.27 & 0.15 & 0.27 & 0.27 \\ \hline 
			Replicate II & 0.27 & 0.24 & 0.20 & 0.20 & 0.20 & 0.27 & 0.27 & 0.15 \\ \hline
			Pooled & 0.21 & 0.20 & 0.18 & 0.20 & 0.23 & 0.21 & 0.27 & 0.21 \\ \hline
		\end{tabular}
	\end{table}
	
	If another completely randomized experiment is planned with lawyers selected from a similar pool with a sample size of 192, then based on the pooled estimated variances shown in Table~\ref{tab:Libgober_varest}, we can apply Theorem~\ref{thm:optimal_allocation} to obtain the optimal designs given in Table~\ref{tab:Libgober_CRD}.
	
	\begin{table}[htbp]
		\centering \small
		\caption{Optimal allocations for future CRD} \label{tab:Libgober_CRD}
		\begin{tabular}{|c|c|c|c|c|c|c|c|c|} \hline
			Optimality & $000$ & $001$ & $010$ & $011$ & $100$ & $101$ & $110$ & $111$ \\ \hline
			A & 24 & 23 & 22 & 23 & 25 & 24 & 27 & 24 \\ \hline 
			D & 24 & 24 & 24 & 24 & 24 & 24 & 24 & 24 \\ \hline
			E & 24 & 22 & 20 & 22 & 26 & 24 & 30 & 24\\ \hline
		\end{tabular}
	\end{table}
	
	Now assume for illustration that (contrary to fact) instead of two replicates, the original experiment was conducted in two blocks, each block representing one type of lawyer (e.g., criminal and divorce), and suppose we want to obtain optimal allocations within each block for a future experiment. Further suppose that the variance estimates in row $j$ of Table~\ref{tab:Libgober_varest} represent estimates of the variances $S^2_{h,j}$ in block $j=1,2$ and block 2 respectively. Then, we can directly use part (a) of Theorem~\ref{thm:optimal_allocation_blk} to derive the A-optimal design. However, neither WBH or BBH appear to hold, and we cannot apply parts (b) and (c) of Theorem~\ref{thm:optimal_allocation_blk}. Conveniently, we can obtain the D- and E-optimal designs using the greedy search algorithm proposed in Section \ref{ss:computation_RBD}.

	\begin{table}[htbp]
		\centering \small
		\caption{Optimal allocations for future RBD} \label{tab:Libgober_RBD}
		\begin{tabular}{|c|c|c|c|c|c|c|c|c|c|} \hline
			Optimality & Block &  $000$ & $001$ & $010$ & $011$ & $100$ & $101$ & $110$ & $111$ \\ \hline
			A & I & 11 & 11 & 10 & 12 & 14 & 10 & 14 & 14  \\
			& II & 13 & 13 & 12 & 11 & 11 & 13 & 13 & 10 \\ \hline
			D &  I & 11 & 11 & 12 & 13 & 13 & 10 & 12 & 14 \\ 
			& II & 13 & 13 & 13 & 12 & 11 & 13 & 11 & 10   \\\hline
			E &  I & 10 & 10 & 10 & 12 & 15 & 10 & 16 & 13   \\ 
			& II & 13 & 12 & 10 & 11 & 12 & 13 & 15 & 10  \\ \hline
		\end{tabular}
	\end{table}

	%%%%%%%
	\section{Discussion}  \label{sec:discussion}
	
	In this paper, we consider optimal allocations of a finite population of experimental units to different treatment combinations of a $2^K$ factorial experiment under the potential outcomes model. Rather than invoking the standard assumption in the mainstream optimal design literature that outcome data comes from a known family of distributions, our work revolves around randomization-based causal inference for the finite-population setting. We find that for $2^K$ factorial designs with a completely randomized treatment assignment mechanism, D-optimal solutions are always balanced designs, while A- and E-optimal solutions are proportional to finite-population standard deviations and finite-population variances of the treatment groups, respectively. For blocked designs, our solution does not admit a closed form for D- or E-optimality without imposing specific restrictions on the potential outcomes, but the A-optimal allocation is equivalent to finding the A-optimal solution within each block. Convenient integer-constrained programming solutions using a greedy optimization approach to find integer optimal allocation solutions for both complete and block randomization are proposed. Optimal allocations are also derived under cost constraints.
	
	While there is a large literature on model-based optimal designs, to the best of our knowledge, such designs have had very limited development for randomization-based inference for finite populations. The ideas explored and results developed in this paper exploit the connection between finite-population sampling and experimental design. This recondite connection has recently been emphasized, explored, and utilized in various contexts by several researchers in causal inference, as discussed in \cite{MDR2018}. This article attempts to further strengthen the bridge between finite-population survey sampling and experimental design by utilizing ideas from proportional and optimal allocation for stratified sampling in the context of optimal designs. While the optimal solutions are derived for a finite-population setting, they are readily applicable to a super-population setting without making any assumptions about the probability distribution of the outcome variable.  
	
	A question that practitioners may ask is, which optimal design should be chosen for a given experiment? The answer would depend on the research goal of the experimenter. As our results have shown, strong assumptions like strict additivity lead to equivalence of A-, D- and E- optimal designs. However, under treatment effect heterogeneity, different criteria will lead to different allocations. Both A- and D-optimality criteria are associated with quality of estimated causal effects - whereas A-optimality minimizes the average variance of estimators, the D-optimality criterion minimizes the volume of the confidence ellipsoid around the parameters. Some researchers \citep[e.g.,][]{JonesMoyerGoos2020} have argued that in model-based settings, A-optimal designs exhibit better performance than D-optimal designs when the objective is screening of active effects from inactive ones. On the other hand, when the goal is to draw the most precise inference on the vector of estimated causal effects, D-optimal design may be a better choice. The goal of the E-optimal design is to minimize the maximum variance of all possible normalized linear combinations of estimated treatment effects. Thus the E-optimal design is useful when a large number of linear combinations of factorial effects are of interest. The E-optimal allocation, being a minimax strategy, is likely to provide a more conservative solution to the inference problem, but as shown by some researchers \citep[e.g.,][]{WKW1994} in other contexts, the E-optimal solution may be less sensitive to incorrect prior information or assumptions about potential outcomes in comparison to A- and D-optimal designs. However, more investigation is required along these lines in the randomization-based setting.
	
	The work presented in this paper can be extended in several directions. One limitation of the proposed approach lies is the fact that the correlation among the potential outcomes under different treatment combinations is unidentifiable from the data, forcing us to ignore one term in the covariance matrix of estimated factorial effects while formulating the optimization problem. \cite{BA2018} proposed a model-based approach to overcome this problem in two-armed experiments, in which information on the correlation among the outcomes is available pre-intervention. Such an idea may be extended to the setting of factorial experiments.
	
	Also, a natural extension of the randomization-based framework of causal inference is the Bayesian framework, in which the potential outcomes are assumed to follow a hierarchical probabilistic model containing hyperparameters with assumed prior distributions. The Bayesian framework proposed in \cite{DPR2015} for drawing both super-population and finite-population causal inference from $2^K$ factorial designs can be utilized to obtain Bayesian optimal deigns according to different criteria proposed in literature \citep[e.g.,][]{ChalVerd1995}. 
	
	Another setting that has gained a lot of attention in recent times is when SUTVA is violated, for example, in the presence of interference between units. Extending the proposed results to such settings is a challenging, yet rewarding problem.
	
	Finally, in certain situations, it is possible that instead of the traditional factorial effects defined by (\ref{eq:popleveltau}), the experimenter is interested in other contrasts of the treatment means or more general factorial effects. One such natural choice of contrast is one that compares the  outcome of the control group with the average of all other groups that have at least one treatment.  Optimal allocations under such a reformulated optimization problem would be an interesting problem to study.
	
	%%%
	
	\vspace{0.2 in}
	\noindent \textbf{Acknowledgement:} This research was partially supported by National Science Foundation grant SES 2217522.
	Any opinions, findings, conclusions, or recommendations expressed in this material are those of the authors and do not necessarily reflect the views of the National Science Foundation.
	
	\bibliographystyle{apalike}
	\bibliography{RPD_ref}

	\begin{appendices}
		
		\newpage
		\setcounter{page}{1}
		\begin{center}
			{\bf \Large  Supplementary Material\\for\\``Optimal allocation of sample size for randomization-based inference from $2^K$ factorial designs''}
		\end{center}

		\section{Proof of results}
		\label{App:ProofofResults}
		
		\noindent \textbf{Proof of Theorem~\ref{thm:optimal_allocation}}:
		
		\medskip
		
		\noindent We first state and prove a lemma about the matrix $\widetilde{\bm{V}}$.
		\begin{lemma} \label{lem:eigen_Vtilde}
			The matrix $\widetilde{\bm{V}}$ has $J$ non-zero eigenvalues $J (S^2_1/N_1), \ldots, J (S^2_J/N_J)$.
		\end{lemma}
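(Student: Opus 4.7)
The plan is to identify the spectrum of $\widetilde{\mathbf{V}}$ by exploiting the factorization $\widetilde{\mathbf{V}} = \mathbf{L}^\T \mathbf{A}\mathbf{L}$, where $\mathbf{A} = \text{diag}(S_1^2/N_1, \ldots, S_J^2/N_J)$ (already noted in the paper just above the lemma), together with the cyclic-eigenvalue identity: for square matrices $\mathbf{B}$ and $\mathbf{C}$ of the same order, $\mathbf{B}\mathbf{C}$ and $\mathbf{C}\mathbf{B}$ share the same multiset of eigenvalues. Taking $\mathbf{B} = \mathbf{L}^\T$ and $\mathbf{C} = \mathbf{A}\mathbf{L}$ immediately reduces the task to finding the eigenvalues of $\mathbf{A}\mathbf{L}\mathbf{L}^\T$.

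The next step uses the (scaled) orthogonality relation $\mathbf{L}\mathbf{L}^\T = J\,\mathbf{I}_J$ recorded immediately after \eqref{eq:matrixL}, which collapses $\mathbf{A}\mathbf{L}\mathbf{L}^\T$ to the diagonal matrix $J\mathbf{A}$. Its $J$ eigenvalues can then be read off by inspection as $J(S_j^2/N_j)$ for $j = 1, \ldots, J$, all nonzero under the standing assumption that each $N_j \geq 1$ and each $S_j^2 > 0$. Combined with the cyclic identity, this yields the lemma.

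As a cross-check---and as an alternative route if one prefers to avoid the cyclic-eigenvalue shortcut---I would exhibit explicit eigenvectors. Setting $\widetilde{\bm{\lambda}}_j = \mathbf{L}^\T\mathbf{e}_j$ (the transpose of the $j$-th row of $\mathbf{L}$, with $\mathbf{e}_j$ the $j$-th standard basis vector, which is an eigenvector of $\mathbf{A}$), a one-line computation gives
\[
\widetilde{\mathbf{V}}\,\widetilde{\bm{\lambda}}_j \;=\; \mathbf{L}^\T\mathbf{A}\mathbf{L}\mathbf{L}^\T\mathbf{e}_j \;=\; J\,\mathbf{L}^\T\mathbf{A}\,\mathbf{e}_j \;=\; J\,(S_j^2/N_j)\,\widetilde{\bm{\lambda}}_j,
\]
simultaneously exhibiting the eigenvalues and a corresponding eigenbasis. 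Since $\mathbf{L}$ is invertible (being scaled orthogonal), the $\widetilde{\bm{\lambda}}_j$'s are linearly independent and therefore exhaust the spectrum, so no eigenvalues are missed. The proof is essentially one step and I do not anticipate any substantive obstacle beyond careful bookkeeping of the scalar constant that falls out of $\mathbf{L}\mathbf{L}^\T$.
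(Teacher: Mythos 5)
Your proof is correct and follows essentially the same route as the paper's: both rest on the scaled orthogonality of $\mathbf{L}$ and read the eigenvalues off the diagonal matrix $J\mathbf{A}$, the paper phrasing this as the spectral decomposition $\widetilde{\mathbf{V}} = (J^{-1/2}\mathbf{L}^{\T})(J\mathbf{A})(J^{-1/2}\mathbf{L})$, which is exactly your explicit-eigenvector computation in disguise. One bookkeeping note: the relation you cite is printed in the paper as $\mathbf{L}\mathbf{L}^{\T} = 2^{K-1}\mathbf{I}_J$, but the constant $J = 2^{K}$ that you use is the one actually needed (and implicitly used) for the lemma, since each column of the $\pm 1$ matrix $\mathbf{L}$ has squared norm $J$.
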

		
		\noindent Proof: Note that $\widetilde{\bm{V}}$ can be written as  $\left( \sqrt{J}^{-1} \bm{L}^{\T} \right) \left(J \bm{A} \right) \left( \sqrt{J}^{-1} \bm{L} \right)$. Because the rows of $\sqrt(J)^{-1} \bm{L}$ form an orthonormal basis of vectors in $\mathbb{R}^J$, the above expression is a spectral decomposition of $\widetilde{\bm{V}}$. Thus, the eigenvalues of $\widetilde{\bm{V}}$ are its diagonal elements $J (S_1^2/N_1), \ldots, J (S_J^2/N_J)$. $\qed$
		
		\bigskip
		
		\noindent Now we prove the three parts of Theorem~\ref{thm:optimal_allocation}.
		
		\medskip
		
		\noindent Part (a): Because the trace of a matrix is the sum of its eigenvalues, from Lemma \ref{lem:eigen_Vtilde}, it follows that 
		$\tr(\widetilde{\bm{V}}) = J \sum_{j=1}^J (S_j^2/N_j)$. The problem of minimizing $\tr(\widetilde{\bm{V}})$ can be expressed as:
		$$ \text{minimize} \sum_{j=1}^J (S_j^2/N_j) \;\ \text{subject to} \sum_{j=1}^J N_j = N$$
		
		This constrained optimization problem can be obtained using the method of Lagrange multipliers, by solving
		\[ \min  \left[\sum_{j=1}^{J} S^2_j/N_j + \lambda \left( \sum_{j=1}^{J} N_j - N \right) \right],
		\]
		where $\lambda$ is a Lagrangian multiplier. Taking partial derivative of the objective function with respect to $N_j$ and setting it to zero, we get
		$ -S^2_j/N_j^2 + \lambda = 0$, which implies $N_j = S_j/\sqrt{\lambda}$.
		Solving for the constraint $\sum_{j=1}^{J} N_j = N$, we get,
		\[
		\sqrt{\lambda} = \frac{\sum_{j=1}^{J} S_j}{N} \Rightarrow N_j = S_j \left(\frac{N}{\sum_{j=1}^{J} S_j}\right) \Rightarrow p_j \propto S_j. \qed
		\]
		
		\medskip
		
		\noindent Part (b): Because the determinant of a matrix is the product of the eigenvalues, from Lemma \ref{lem:eigen_Vtilde}, we have $ |\widetilde{\bm{V}}| = J^{J} \prod_{j=1}^J (S_j^2/N_j). $
		The problem of minimizing the determinant can thus be equivalently expressed as
		$$ \text{minimize} \sum_{j=1}^J \log (S_j^2/N_j) \;\ \text{subject to} \;\ \sum_{j=1}^J N_j = N.$$
		Taking partial derivative of the objective function with respect to $N_j$ and setting it to zero, we get
		$ -(S^2_j/N_j^2)(N_j/S_j^2) + \lambda = -1/N_j + \lambda = 0$.
		It is straightforward to see that the optimal solution is $N_j = N/J$ or equivalently $p_j = 1/J$ for $j=1, \ldots, J$. $\qed$
		
		\medskip
		
		\noindent Part (c): Let $(N_1^{\mathcal{D}}, \ldots, N_J^{\mathcal{D}})$ denote the allocation vector of a design $\mathcal{D}$ . Also, let the $J$ eigenvalues of $\widetilde{\bm{V}}$ for design $\mathcal{D}$ be $\nu_1^{\mathcal{D}}, \ldots, \nu_J^{\mathcal{D}}$, and $\nu_{(1)}^{\mathcal{D}} \le \ldots \le \nu_{(J)}^{\mathcal{D}}$ denote the ordered eigenvalues. We will show that the design $\mathcal{D}^*$ in which $N_j^{\mathcal{D^*}} = (N  S^2_j)/ \left( \sum_{j=1}^J S^2_j \right)$ for $j=1, \ldots, J$ is the E-optimal design. From Lemma~\ref{lem:eigen_Vtilde}, design $\mathcal{D}^*$ can be characterized as a design in which all eigenvalues are equal to $J \left( \sum_{j=1}^J S^2_j \right) / N = \nu^{\mathcal{D}^*}$. Recall the definition of E-optimality of minimizing the maximum eigenvalue of the design matrix. Thus, it suffices to show that any design in which all eigenvalues of $\widetilde{\bm{V}}$ are not equal cannot be E-optimal because the solution $\mathcal{D}^*$ with all eigenvalues equal to $J \left( \sum_{j=1}^J S^2_j \right) / N $ is the only solution  with all eigenvalues equal in the feasible space of the optimization problem.
		
		We will proceed by proof by contradiction.
		Assume that a design $\Dtil$ for which all eigenvalues of $\widetilde{\bm{V}}$ are not equal is E-optimal, i.e.,
		\begin{equation}
			\nu_{(J)}^{\Dtil} \le \  \nu_{(J)}^{\mathcal{D}}, \label{eq:Eoptcond1} 
		\end{equation}
		for any design $\mathcal{D}$.
		
		Let $\mathcal{M}$ denote the set of $m \geq 1$ equal maximum eigenvalues of $\widetilde{\bm{V}}$ for design $\Dtil$ ($m=1$ indicate a unique maximum and because not all eigenvalues are equal we must have $m < J$). Then for any $j_1 \in \mathcal{M}$ and any $j_2 \notin \mathcal{M}$, $\nu_{j_1}^{\Dtil} > \nu_{j_2}^{\Dtil}$. Construct a new design $\Dprime$ by perturbing only the allocations for \emph{all} treatments $j_1 \in \mathcal{M}$ and \emph{one specific} $j_2 \notin \mathcal{M}$ as follows:
		\begin{eqnarray*}
			N_{j_1}^{\Dprime} &=& S^2_{j_1} \frac{\sum_{j \in \mathcal{M}} N_{j}^{\Dtil} + N_{j_2}^{\Dtil}}{ \sum_{j \in \mathcal{M}}S^2_{j} + S^2_{j_2}} \ \mbox{for all} \ j_1 \in \mathcal{M}  \\
			N_{j_2}^{\Dprime} &=& S^2_{j_2}  \frac{\sum_{j \in \mathcal{M}} N_{j}^{\Dtil} + N_{j_2}^{\Dtil}}{ \sum_{j \in \mathcal{M}}S^2_{j} + S^2_{j_2}}\\
			N_j^{\Dprime} &=& N_j^{\Dtil}, \ \mbox{for} \ j \notin \mathcal{M} \cup \{j_2\}. \qed
		\end{eqnarray*}
		
		Using Lemma \ref{lem:eigen_Vtilde}, after a little algebra, it follows that $ \nu_{j_1}^{\Dtil} > \nu_{j_1}^{\Dprime} = \nu_{j_2}^{\Dprime} > \nu_{j2}^{\Dtil}$ for all $j_1 \in \mathcal{M}$. Also, for all $j \notin \mathcal{M} \cup \{j_2\}$, $\nu_{j_1}^{\Dtil} > \nu_{j}^{\Dtil} = \nu_{j}^{\Dprime}$. Consequently $\nu_{(J)}^{\Dtil} > \  \nu_{(J)}^{\Dprime}$, and contradicts (\ref{eq:Eoptcond1}).
		
		\medskip
		
		\hrule
		
		\bigskip
		
		\noindent \textbf{Proof of Theorem~\ref{thm:optimal_allocation_blk}}:
		
		\medskip
		We need the following lemma regarding the eigenvalues of the covariance matrix $\widetilde{\bm{V}}_{\text{blk}}$ defined in (\ref{eq:blk_matrix_criterion}) under a blocked design:
		
		\begin{lemma} \label{lem:eigen_Vtilde_blk}
			The matrix $\widetilde{\bm{V}}_{\text{blk}}$ has $J$ non-zero eigenvalues $J S^2_{\text{blk}, 1}, \ldots, J S^2_{\text{blk}, J}$.
		\end{lemma}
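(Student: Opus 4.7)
The plan is to mimic the proof of Lemma~\ref{lem:eigen_Vtilde} almost verbatim, exploiting the fact that $\widetilde{\bm{V}}_{\text{blk}}$ has the same quadratic form $\bm{L}^{\T} \bm{A}_{\text{blk}} \bm{L}$ as $\widetilde{\bm{V}}$, with the only change being that the diagonal entries of $\bm{A}_{\text{blk}}$ are $S^2_{\text{blk}, j}$ instead of $S^2_j/N_j$. Since the block-dependent quantities $S^2_{\text{blk}, j} = \sum_{h=1}^H (M_h^2/N^2)(S^2_{h,j}/M_{h,j})$ are just scalars, the matrix $\bm{A}_{\text{blk}}$ remains a genuine $J \times J$ diagonal matrix, so the structural analogy is complete.

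First I would recall from (\ref{eq:blk_matrix_criterion}) that $\widetilde{\bm{V}}_{\text{blk}} = \bm{L}^{\T} \bm{A}_{\text{blk}} \bm{L}$ and rewrite this as
\begin{equation*}
\widetilde{\bm{V}}_{\text{blk}} = \left( \tfrac{1}{\sqrt{J}} \bm{L}^{\T} \right) \left( J \bm{A}_{\text{blk}} \right) \left( \tfrac{1}{\sqrt{J}} \bm{L} \right).
\end{equation*}
Next I would invoke the orthogonality property of $\bm{L}$ established in Section~\ref{sec:notation}, which ensures that the rows of $J^{-1/2} \bm{L}$ form an orthonormal basis of $\mathbb{R}^J$. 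This identifies the displayed expression as a spectral (eigen)decomposition of $\widetilde{\bm{V}}_{\text{blk}}$, with eigenvalues given by the diagonal entries of $J \bm{A}_{\text{blk}}$, namely $J S^2_{\text{blk}, 1}, \ldots, J S^2_{\text{blk}, J}$.

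There is no real obstacle here beyond verifying the algebra and pointing out that $S^2_{\text{blk}, j} > 0$ (so all $J$ eigenvalues are strictly positive, justifying the ``non-zero'' language of the lemma). This positivity is immediate because each $S^2_{h,j} \ge 0$, with the $S^2_{h,j}$'s being strictly positive in any non-degenerate experiment, and $M_h, M_{h,j}, N$ are positive integers. Thus the proof reduces to a one-line spectral-decomposition argument and a sign remark, paralleling Lemma~\ref{lem:eigen_Vtilde} exactly.
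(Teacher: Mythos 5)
Your proof is correct and follows essentially the same route as the paper: both express $\widetilde{\bm{V}}_{\text{blk}}$ as $\left( J^{-1/2} \bm{L}^{\T} \right) \left( J \bm{A}_{\text{blk}} \right) \left( J^{-1/2} \bm{L} \right)$, identify this as a spectral decomposition via the orthonormality of the rows of $J^{-1/2}\bm{L}$, and read off the eigenvalues $J S^2_{\text{blk},1}, \ldots, J S^2_{\text{blk},J}$. Your added remark on strict positivity is a harmless refinement the paper omits.
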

		
		\noindent Proof: Along the same lines as Lemma~\ref{lem:eigen_Vtilde}, note that $\widetilde{\bm{V}}_{\text{blk}}$ can written as  $\left( \sqrt{J}^{-1} \bm{L}^{\T} \right) \left(J \bm{A}_{\text{blk}} \right) \left( \sqrt{J}^{-1} \bm{L} \right)$. Because the rows of $\sqrt(J)^{-1} \bm{L}$ forms an orthonormal basis of vectors in $\mathbb{R}^J$, the above expression is a spectral decomposition of $\widetilde{\bm{V}}_{\text{blk}}$. Thus, the eigenvalues of $\widetilde{\bm{V}}_{\text{blk}}$ are its diagonal elements $J S^2_{\text{blk}, 1}, \ldots, J S^2_{\text{blk}, J}$. $\qed$
		
		\bigskip
		
		\noindent Now we prove the three parts of the main theorem.
		
		\noindent Part (a): We can proceed very similarly to the proof of Theorem~\ref{thm:optimal_allocation}(a). Because the trace of a matrix is the sum of its eigenvalues, using Lemma~\ref{lem:eigen_Vtilde_blk},
		$\tr(\widetilde{\bm{V}}_{\text{blk}}) = J \sum_{j=1}^J  S^2_{\text{blk}, j}$. The problem of minimizing $\tr(\widetilde{\bm{V}}_{\text{blk}})$ can be expressed as
		$$ \text{minimize} \sum_{j=1}^J  S^2_{\text{blk}, j} \;\ \text{subject to} \sum_{j=1}^J M_{h,j} = M_h \ \text{for } h=1,\dots,H.$$
		
		We can again solve this using the method of Lagrange multipliers,
		\begin{align*}
			\min  \left[\sum_{j=1}^{J} S^2_{\text{blk}, j} + \sum_{h=1}^H\lambda_h \left( \sum_{j=1}^{J} M_{h,j} - M_h \right) \right]=\min  \left[\sum_{j=1}^{J} \sum_{h=1}^H \frac{M_h^2}{N^2}\frac{S^2_{h,j}}{M_{h,j}} + \sum_{h=1}^H\lambda_h \left( \sum_{j=1}^{J} M_{h,j} - M_h \right) \right],
		\end{align*} 
		where $\lambda_h$ are Lagrangian multipliers.
		Taking partial derivative of the objective function with respect to $M_{h,j}$ and setting it to zero, we get
		$ -\frac{M_h^2}{N^2}\frac{S^2_{h,j}}{M_{h,j}^2} + \lambda_h = 0$, which implies 
		\[M_{h,j} = \frac{M_h}{N}\frac{S_{h,j}}{\sqrt{\lambda_h}}.\]
		Solving for the constraint $\sum_{j=1}^{J} M_{h,j} = M_h$, we get,
		\[
		\sqrt{\lambda_h} = \frac{\sum_{j=1}^{J} S_{h,j}}{N} \Rightarrow M_{h,j} =  M_h\frac{S_{h,j}}{\sum_{j=1}^{J} S_{h,j}} \Rightarrow p_{h,j} = \frac{S_{h,j}}{\sum_{j=1}^{J} S_{h,j}}. \qed
		\] 
		
		\bigskip
		
		\noindent Part (b): Because the determinant of a matrix is the product of the eigenvalues, from Lemma \ref{lem:eigen_Vtilde_blk}, we have
		$$ |\widetilde{\bm{V}}| = J^{J} \prod_{j=1}^J S^2_{\text{blk}, j} \quad . $$
		The problem of minimizing the determinant can thus be equivalently expressed as
		$$ \text{minimize} \sum_{j=1}^J \log (S^2_{\text{blk}, j}) \;\ \text{subject to} \;\ \sum_{j=1}^J M_{h,j} = M_h \quad .$$
		
		To prove the special cases, we use the Lagrangian multiplier based optimization approach as in (a). So, we need to solve
		\begin{align*}
			&\min  \left[\sum_{j=1}^{J} \log(S^2_{\text{blk}, j}) + \sum_{h=1}^H\lambda_h \left( \sum_{j=1}^{J} M_{h,j} - M_h \right) \right]\\
			&=\min  \left[\sum_{j=1}^{J}\log\left( \sum_{h=1}^H \frac{M_h^2}{N^2}\frac{S^2_{h,j}}{M_{h,j}}\right) + \sum_{h=1}^H\lambda_h \left( \sum_{j=1}^{J} M_{h,j} - M_h \right) \right],
		\end{align*} 
		where $\lambda_h$ are Lagrangian multipliers.
		
		Taking the derivative with respect to $M_{h,j}$ and setting equal to 0, we have
		\begin{align}
			0=   -M_h^2\frac{S_{h,j}^2}{M_{h,j}^2}\left(\sum_{k=1}^HM_k^2\frac{S_{k,j}^2}{M_{k,j}}\right)^{-1} + \lambda_h \quad .
			\label{pf:lagrange_cond} 
		\end{align}

		\noindent \textbf{Under special case (i), WBH: variances of potential outcomes are the same within each block (such that $S^2_{h,j} = S^2_{h,\cdot}$ for all $j=1,\dots,J$).}
		
		\noindent Substitution of $S^2_{h,j} = S^2_{h,\cdot}$ into (\ref{pf:lagrange_cond}) yields
		$$0 = -M_h^2\frac{S_{h,\cdot}^2} {M_{h,j}^2}\left(\sum_{k=1}^HM_k^2\frac{S_{k,j}^2}{M_{k,j}}\right)^{-1} + \lambda_h.$$
		
		After a little algebra, we get
		\begin{equation}
			M_{h,j} =  M_h\frac{S_{h,\cdot}}{\sqrt{\lambda_h}} c_j \quad \text{ where } c_j =\sqrt{\left(\sum_{k=1}^HM_k^2\frac{S_{k,\cdot}^2}{M_{k,j}}\right)^{-1} } \label{eq:eq_RBD_Di_1} 
		\end{equation}
		
		Now, summing over $j$ in (\ref{eq:eq_RBD_Di_1}) and applying the second constraint $\sum_{j=1}^J M_{h,j} = M_h$ we have:
		\begin{eqnarray}
			&& M_h = \sum_{j=1}^J M_{h,j} =  \sum_{j=1}^JM_h\frac{S_{h,\cdot}}{\sqrt{\lambda_h}}c_j =  \frac{S_{h,\cdot}M_h}{\sqrt{\lambda_h}}\sum_{j=1}^Jc_j \nonumber \\
			&\Rightarrow& \frac{S_{h,\cdot}}{\sqrt{\lambda_h}} =   \left(\sum_{j=1}^Jc_j\right)^{-1} \label{eq:eq_RBD_Di_2} 
		\end{eqnarray}
		Thus, substituting (\ref{eq:eq_RBD_Di_2}) in (\ref{eq:eq_RBD_Di_1}),
		\begin{align*}
			M_{h,j} = M_h \frac{c_j}{\sum_{j=1}^J c_j} = M_h \delta_j \text{ where } \delta_j =\frac{c_j}{\sum_{j=1}^J c_j} \ .  
		\end{align*}
		
		Substituting this back into definition of $c_j$ in (\ref{eq:eq_RBD_Di_1}) gives us,
		\begin{align*}
			c_j &=  \sqrt{\left(\sum_{k=1}^HM_k^2\frac{S_{k,\cdot}^2}{M_k \delta_j}\right)^{-1}} 
			= \sqrt{\left(\sum_{k=1}^HM_k\frac{S_{k,\cdot}^2}{\delta_j}\right)^{-1}} 
			= \sqrt{\delta_j} \sqrt{\left(\sum_{k=1}^H M_k S_{k,\cdot}^2\right)^{-1}} \\ 
			&= \sqrt{\delta_j} \alpha \quad \text{where} \quad \alpha = \sqrt{\left(\sum_{k=1}^H M_k S_{k,\cdot}^2\right)^{-1}} \quad \text{is a constant}\\
			&= \frac{\sqrt{c_j}}{\sqrt{\sum_{j=1}^J{c_j}}}\alpha \\
			&\implies c_j = \frac{\alpha^2}{\sum_{j=1}^J{c_j}} = \beta, \quad \text{a constant free of $j$}.
		\end{align*}
		
		Thus, $\delta_j = c_j / \sum_j c_j = 1/J$ and
		$M_{h,j} = M_h/J$ and $p_{h,j}=1/J$. $\qed$

		\noindent \textbf{Under special case (ii), BBH: variances are the same across blocks for each treatment (such that $S^2_{h,j} = S^2_{\cdot,j}$ for all $h=1,\dots,H$).}
		
		Substituting $S^2_{h,j} = S^2_{\cdot,j}$ in (\ref{pf:lagrange_cond}), we get, 
		\begin{align}
			&0=   -\frac{M_h^2}{M_{h,j}^2}\left(\sum_{k=1}^H\frac{M_k^2}{M_{k,j}}\right)^{-1} + \lambda_h \nonumber\\
			\implies &M_{h,j} =  \frac{M_h}{\sqrt{\lambda_h}}\sqrt{\left(\sum_{k=1}^H\frac{M_k^2}{M_{k,j}}\right)^{-1} } =  \frac{M_h}{\sqrt{\lambda_h}} c_j \quad \text{ where } c_j =\sqrt{\left(\sum_{k=1}^H\frac{M_k^2}{M_{k,j}}\right)^{-1} }  \label{eq:eq_RBD_Dii_1}
		\end{align}
		
		Now, summing over $j$ and applying the second constraint $\sum_{j=1}^J M_{h,j}$ yields:
		\begin{align*}
			&M_h = \sum_{j=1}^J M_{h,j} =  \sum_{j=1}^J\frac{M_h}{\sqrt{\lambda_h}}c_j = \frac{M_h}{\sqrt{\lambda_h}} \sum_{j=1}^J c_j \\
			\implies &\frac{M_h}{\sqrt{\lambda_h}} = \frac{M_h}{\sum_{j=1}^Jc_j}.
		\end{align*}
		
		Proceeding similarly as in the proof of the previous part, we can show that $\delta_j = c_j/(\sum_{j=1}^Jc_j) = 1/J$ for all $j = 1, \ldots, J$ and hence $M_{h,j} = M_h/J$ and $p_{h,j}=1/J$. $\qed$
		
		\bigskip
		
		\noindent Part (c):
		Recall from Section~\ref{sec:RBDfactorial}, $
		S^2_{blk,j} = \sum_{h=1}^H (M_h/N)^2(S^2_{h,j}/M_{h,j})$.
		
		\noindent We need the following lemmas to prove the theorem.
		\begin{lemma}\label{lem:Eoptalg1}
			For any $r>0$ and $J>1$, $\delta_j> -r$ such that $\sum_{j=1}^J \delta_j = 0$ and $\delta_j \ne 0 \ \forall j=1,2,\dots,J$,
			\[
			\sum_j \biggl(\frac{1}{r+\delta_{j}} - \frac{1}{r} \biggr) > 0
			\]
		\end{lemma}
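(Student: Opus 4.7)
The plan is to recognize this as a direct application of Jensen's inequality to a suitable convex function. Define $f(x) = 1/(r+x)$ on the domain $x > -r$ (which contains all the $\delta_j$ by hypothesis). A quick derivative computation gives $f''(x) = 2/(r+x)^3 > 0$ on this domain, so $f$ is strictly convex there. This convexity is really the only analytical fact needed.

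Next, I would apply Jensen's inequality with equal weights $1/J$ to the points $\delta_1, \ldots, \delta_J$, obtaining
\[
\frac{1}{J}\sum_{j=1}^{J} \frac{1}{r+\delta_j} \;\geq\; f\!\left(\frac{1}{J}\sum_{j=1}^{J}\delta_j\right) \;=\; f(0) \;=\; \frac{1}{r},
\]
where the middle equality uses the zero-sum hypothesis $\sum_j \delta_j = 0$. Rearranging yields $\sum_j \bigl(1/(r+\delta_j) - 1/r\bigr) \geq 0$, which is the inequality in the lemma, except for strictness.

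The only real subtlety, and the step I would treat most carefully, is upgrading the weak inequality to a strict one. Strict convexity of $f$ makes Jensen's inequality strict unless all the $\delta_j$ are equal. But if $\delta_1 = \cdots = \delta_J$ and their sum is zero, then every $\delta_j = 0$, contradicting the standing assumption that $\delta_j \neq 0$ for each $j$. (The assumption $J > 1$ is used implicitly here, since for $J=1$ the constraints would be inconsistent.) Hence equality in Jensen's inequality is impossible under the hypotheses, and the inequality is strict, giving the desired conclusion. No further estimation or algebra should be needed.
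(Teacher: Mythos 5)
Your proof is correct, and it takes a genuinely different route from the paper's. You observe that $x \mapsto 1/(r+x)$ is strictly convex on $(-r,\infty)$ and apply Jensen's inequality with equal weights, using $\sum_j \delta_j = 0$ to evaluate the right-hand side at $f(0)=1/r$; strictness then follows because equality in Jensen would force all $\delta_j$ equal, hence all zero, contradicting the hypothesis. The paper instead argues elementarily: writing each summand as $-\delta_j/\bigl(r(r+\delta_j)\bigr)$, it picks $j^* = \argmin_{\delta_j>0}\delta_j$ and shows term by term that $-\delta_j/(r+\delta_j) \ge -\delta_j/(r+\delta_{j^*})$ (strictly for the negative $\delta_j$'s, which must exist), so the sum strictly exceeds $-\sum_j \delta_j/\bigl(r(r+\delta_{j^*})\bigr) = 0$. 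Your approach is shorter and identifies the real structural reason the inequality holds (convexity), and it would generalize immediately to unequal weights or to other convex perturbation functions; the paper's comparison argument is more hands-on but entirely self-contained, requiring no appeal to Jensen. Both handle the strictness issue correctly, and both use $J>1$ together with the zero-sum and nonvanishing conditions in the same essential way.
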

		
		\begin{proof} Let $j^* = \argmin_{\delta_j>0} \delta_j$. Then,
			
			\begin{itemize}
				\item when $\delta_j > 0 $ and $j \ne j^*$, $
				(r + \delta_j^*) < (r + \delta_j) \implies 1/(r + \delta_j^*) > 1/(r + \delta_j) \implies -\delta_j/(r + \delta_j^*) < -\delta_j/(r + \delta_j)$
				\item when $\delta_j < 0 $, $
				(r + \delta_j^*) > (r + \delta_j) \implies 1/(r + \delta_j^*) < 1/(r + \delta_j) \implies -\delta_j/(r + \delta_j^*) < -\delta_j/(r + \delta_j)$
				\item when $\delta_j = 0 $, $
				-\delta_j/(r + \delta_j^*) = -\delta_j/(r + \delta_j)$
				\item when $j = j^*$, $
				-\delta_j/(r + \delta_j^*) = -\delta_j/(r + \delta_j)$ 
			\end{itemize}
			Thus, it holds that,
			\begin{align*}
				\sum_j \frac{-\delta_j}{r + \delta_j} > \sum_j \frac{-\delta_j}{r + \delta_{j^*}}
			\end{align*}
			
			Then,
			\begin{align*}
				\sum_j \biggl(\frac{1}{r+\delta_{j}} - \frac{1}{r} \biggr) = \sum_j \frac{-\delta_{j}}{r(r+\delta_{j})} > \sum_j \frac{-\delta_{j}}{r(r+\delta_{j^*})} = \frac{-\sum_j \delta_{j}}{r(r+\max_j\delta_{j^*})} = 0
			\end{align*}
		\end{proof}
		
		\begin{lemma}\label{lem:min_blockconstraints}
			Given integers $M_h$ for $h=1,\dots,H$, let $M_{h,j}$ denote any allocation of $M_h$ into $J>1$ groups such that $\sum_j M_{h,j} = M_h$. Let $a_h>0$, for $h \in \{1,\dots,H\}$, be fixed. Then,  
			\[\max_j \sum_{h=1}^H \frac{a_h}{M_{h,j}} \ge \sum_h \frac{a_h}{(\frac{M_h}{J})}.\] 
			That is, the allocation that minimizes $\max_j \sum_{h=1}^H a_h/M_{h,j}$ is $M_{h,j} = M_h/J$.
		\end{lemma}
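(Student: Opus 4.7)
\medskip

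\noindent\textbf{Proof proposal for Lemma~\ref{lem:min_blockconstraints}:}

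The plan is to chain two elementary inequalities: first replace the maximum over $j$ by the average over $j$, then apply an AM--HM style bound within each block $h$. Concretely, for each fixed allocation $\{M_{h,j}\}$ with $\sum_j M_{h,j}=M_h$, set $f_j=\sum_{h=1}^H a_h/M_{h,j}$. Because the maximum of $J$ real numbers is at least their arithmetic mean,
\begin{equation*}
\max_j f_j \;\ge\; \frac{1}{J}\sum_{j=1}^{J} f_j \;=\; \frac{1}{J}\sum_{h=1}^H a_h \sum_{j=1}^{J}\frac{1}{M_{h,j}}.
\end{equation*}

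Next I would apply the AM--HM inequality (equivalently the Cauchy--Schwarz inequality in the form $\bigl(\sum_j 1\bigr)\bigl(\sum_j M_{h,j}\bigr) \ge \bigl(\sum_j \sqrt{M_{h,j}/M_{h,j}}\cdot 1\bigr)^{2}$) separately for each block $h$: since $\sum_j M_{h,j}=M_h$,
\begin{equation*}
\sum_{j=1}^J \frac{1}{M_{h,j}} \;\ge\; \frac{J^{2}}{\sum_{j=1}^J M_{h,j}} \;=\; \frac{J^{2}}{M_h}.
\end{equation*}
Substituting this block by block (using $a_h>0$ to preserve the inequality) yields
\begin{equation*}
\max_j f_j \;\ge\; \frac{1}{J}\sum_{h=1}^H a_h\cdot\frac{J^{2}}{M_h} \;=\; \sum_{h=1}^H \frac{a_h}{M_h/J},
\end{equation*}
which is exactly the claimed bound.

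Finally, to confirm that the balanced allocation $M_{h,j}=M_h/J$ attains the bound, I would observe that this choice simultaneously produces equality in AM--HM in every block (since all $M_{h,j}$ are equal within each $h$) and makes the values $f_j$ constant across $j$, so the maximum equals the average. There is no real obstacle here; the only subtlety is that the lemma as written treats $M_{h,j}$ as a real-valued allocation (it must, since $M_h/J$ is generally non-integer), so the attainability statement is in terms of the relaxed optimization, consistent with its use in the exact-optimality theorem rather than the integer-programming section.
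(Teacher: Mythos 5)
Your proof is correct, but it takes a genuinely different and more direct route than the paper. The paper argues by contradiction: it assumes an unbalanced allocation is minimax-optimal, perturbs it toward the variance-proportional form, and invokes a separate auxiliary result (Lemma~\ref{lem:Eoptalg1}, a strict convexity-type inequality for sums of reciprocals under zero-sum perturbations) to derive a contradiction. You instead chain two one-line inequalities: $\max_j f_j \ge J^{-1}\sum_j f_j$, followed by the AM--HM bound $\sum_j 1/M_{h,j} \ge J^2/M_h$ within each block, with both inequalities tight simultaneously at the balanced allocation. Your argument is shorter, avoids the auxiliary lemma entirely, and sidesteps some delicate bookkeeping in the paper's contradiction setup (e.g., the requirement in Lemma~\ref{lem:Eoptalg1} that every $\delta_j$ be nonzero, which a general unbalanced competitor need not satisfy in every block); the paper's perturbation approach, on the other hand, mirrors the eigenvalue-perturbation technique used for E-optimality in the CRD case and so fits the paper's narrative arc. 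Two small remarks: your parenthetical Cauchy--Schwarz display is mistyped (the correct form is $\bigl(\sum_j 1/M_{h,j}\bigr)\bigl(\sum_j M_{h,j}\bigr) \ge J^2$), though the AM--HM inequality you actually use is stated and applied correctly; and your closing observation that attainability refers to the relaxed real-valued allocation is accurate and consistent with how the lemma is used in Theorem~\ref{thm:optimal_allocation_blk}.
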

		\begin{proof}
			
			For each $h=1,\dots,H$, using an argument similar to the one in the proof of Theorem~\ref{thm:optimal_allocation}, we can write for $h=1,\dots,H$,
			\[
			\max_j \frac{a_h}{M_{h,j}} \ge \frac{a_h}{\frac{M_h}{J}}
			\]
			
			Suppose there exists an allocation $\tilde M_{h,j} = M_h/J + \delta^h_j$ that maximizes $\max_j \sum_{h=1}^H a_h/M_{h,j}$, where for each $h$, $\exists j$ such that $ \delta^h_j \ne 0$. Then, $\sum_{j=1}^J \tilde{M}_{h,j}= M_h \implies \sum_{j=1}^J \delta^h_j = 0 \ \forall h = 1,\dots,H$. Further, by our assumption on $\tilde M_{h,j}$,
			\begin{align*}
				\max_j \sum_{h=1}^H \frac{a_h}{\tilde M_{h,j}} \le \sum_{h=1}^H \frac{a_h}{\frac{M_h}{J}} &\iff \max_j \sum_{h=1}^H \biggl(\frac{a_h}{\tilde M_{h,j}} - \frac{a_h}{\frac{M_h}{J}}\biggr) \le 0 \iff \max_j \sum_{h=1}^H \biggl(\frac{a_h}{(\frac{M_h}{J}+\delta^h_j)} -  \frac{a_h}{\frac{M_h}{J}}\biggr) \le 0  \\ 
				&\iff \max_j \sum_{h=1}^H \epsilon^h_j \le 0 \text{ (where } \epsilon^h_{j} =  \frac{a_h}{(\frac{M_h}{J}+\delta^h_j)} - \frac{a_h}{\frac{M_h}{J}})\\
				&\iff \sum_{h=1}^H \epsilon^h_j \le 0 \quad \forall j \\
				&\implies \sum_{j=1}^J \sum_{h=1}^H \epsilon^h_j \le 0 \\
				&\iff \sum_{h=1}^H a_h \sum_{j=1}^J \biggl(\frac{1}{(\frac{M_h}{J}+\delta^h_j)} - \frac{1}{\frac{M_h}{J}}\biggr) \le 0 
			\end{align*}
			which  is a contradiction by taking $r= M_h/J$ and  $\delta_j = \delta^h_j$ for each $h$ in Lemma~\ref{lem:Eoptalg1}.
			
			\noindent Thus, we have that,
			\[
			\max_j \sum_h \frac{a_h}{M_{h,j}} \ge \sum_h \frac{a_h}{\frac{M_h}{J}}
			\]
		\end{proof}

		\noindent \textbf{Under special case, WBH: variances of potential outcomes are the same within each block (such that $S^2_{h,j} = S^2_{h,\cdot}$ for all $j=1,\dots,J$).}
		
		\noindent We have,
		$S^2_{blk,j} = \sum_{h=1}^H (M_h/N)^2(S^2_{h,.}/M_{h,j})$. 
		We can rewrite this equation as $S^2_{blk,j}= \sum_{h=1}^H a_h/M_{h,j}$ where $a_h = (M_h/N)^2S^2_{h,.}$ does not depend on $j$.

		\noindent By Lemma~\ref{lem:min_blockconstraints}, we get $\max_j \sum_{h=1}^H a_h/M_{h,j} \ge J \sum_h a_h/M_h$. 
		
		We immediately see that for $M^*_{h,j} = M_h/J$, the left hand side of the inequality attains the lower bound, which is minimax. Hence, $p^*_{h,j} = 1/J$ for each treatment $j$ within block $h$, a balanced design within each block.  \qed
		
		\medskip
		
		\section{Conditions for Greedy Algorithm from \cite{FRIEDRICH20151}} \label{conditionsGreedy}
		Theorem~3.2 of \cite{FRIEDRICH20151} have the following conditions to be satisfied for global convergence of the greedy algorithm to the true optimum. We restate the theorem here for reference.
		
		\begin{theorem3.2}
			
			The globally optimal integer solution of the problem 
			\[ \min_{(N_1,\dots,N_J)} \biggl\{f(N_1,\dots,N_J) \ \bigg| \ N_j\ge0 \ \forall j, \ \sum_{j\in A} N_j \le \varphi(A) \ \forall A\subset E \biggr\},
			\]
			is found by a Greedy algorithm if
			\begin{enumerate}
				\item E is a finite set,
				\item $\varphi : 2^E \rightarrow \mathbb{Z}_+$ is submodular, monotone and satisfies $\varphi(\phi) = 0$,
				\item $f : \mathbb{R}^E_+ \rightarrow \mathbb{R}$ is separable and convex with continuous components.
			\end{enumerate}
		\end{theorem3.2}
		
		\medskip
		
		\noindent In the case of a CRD, we can identify the following in the theorem above:
		\begin{itemize}
			\item $E=\{1,2,..J\}$ (finite)
			\item $A \in \{\{\emptyset\},\{1\},...,\{J\},\{1,2\},..,\{1,...,J\}\} = 2^E$
			\item $\varphi(A) = \min(\sum_{j\in A}(u_j-l_j), N - \sum_{j\in E}l_j)$
			\item $f(N_1,\dots,N_J)$ defined as in Section~\ref{sec:CRDfactorial}, $\sum_j S_j^2/N_j$ for A-optimality and $\sum_j \log(S_j^2/N_j)$ for D-optimality respectively (separable and convex) and each continuous in their individual components, i.e., $f_j(N_j)$ are continuous in $N_j$ ($1/x$ and $\log(1/x)$ is continuous in $x$ for $x \in \mathbb{R}^E_+$).
		\end{itemize} 
		
		\noindent Condition~1 is already satisfied as noted above. Condition~3 is straightforward since all our real-valued objective functions are convex and finite-dimensional and hence separable. Thus, it suffices to show that Condition~2 above in Theorem~3.2 is satisfied in our case for the submodular set function $\varphi(.)$. Again, we give the definition of a submodular function as in \cite{FRIEDRICH20151}.

		\begin{Definition}[Submodular function]
			$\varphi : 2^E \rightarrow \mathcal{Z}_+$ is submodular if 
			\[
			\varphi(X\cap Y) + \varphi(X\cup Y) \le \varphi(X) + \varphi(Y), \ \forall X,Y\subset E
			\]
		\end{Definition}
		\vskip 1em
		
		Thus, in the case of a CRD, take $A$ to be defined as above, then $g(A) = \sum_{j \in A} (u_j-l_j)$, $h(A) = N - \sum_{j \in A} l_j$ as corresponding set functions for the following argument. Linear set functions are always submodular as can be quickly shown: $g(X\cup Y) = \sum_{j\in X}(u_j-l_j) + \sum_{j\in Y}(u_j-l_j) - \sum_{j\in X\cap Y} (u_j-l_j) = g(X) + g(Y) - g(X \cap Y)$. Similarly for $h$, $h(X\cup Y) = (N-\sum_{j\in X} u_j) + (N-\sum_{j\in Y}u_j) - (N-\sum_{j\in X\cap Y} u_j) = h(X) + h(Y) - h(X \cap Y)$. Submodularity of our $\phi(.)$ function follows directly from \cite{FRIEDRICH20151} as $\min(g,h)$ is submodular if $g,h$ are submodular and $g-h$ is monotone. $g$ is submodular and monotone (defined as $\forall T,S \subset E, \text{ s.t. } T \subset S \Rightarrow f(T) \le f(S)$) because $u_j-l_j>0, \ \forall j\in J$ . And, $h$ is submodular. Finally, $(g-h)(A) = \sum_{j\in A}u_j - N$ is monotone since $g-h$ is linear in $A$. 
		
		\section{Empirical evidence of greedy algorithm for D-optimality in RBD} \label{App: EmpEvidenceGreedyBlk} 
		Following from Section~\ref{sec:RBDfactorial},  
		we show the performance of the greedy algorithms for finding D-optimal solutions empirically.
		
		Five different settings of $2^2$ factorial designs in two blocks, each corresponding to a specific type of potential outcome matrix, are considered. Each setting is defined by the block sizes $M_1$ and $M_2$, and the $4 \times 2$ matrix of variances $S^2_{h,j}$ as shown in columns 3 and 4 of Table~\ref{tbl:D-opt_GreedySim}, respectively.
		
		The first setting considers blocks of equal sizes with potential outcomes satisfying Condition~\ref{cond_str_add} (strict additivity), leading to a D-optimal design that is balanced within each block as per Remark~\ref{remark:str_add_blk1}. The second setting considers equal block sizes with potential outcomes satisfying Condition~\ref{cond_WBH} (WBH). The third setting considers unequal block sizes with potential outcomes satisfying Condition~\ref{cond_BBH} (BBH) but not Condition~\ref{cond_WBH} (WBH). Note that, in the above cases, the exact solution as given by Theorem~\ref{thm:optimal_allocation_blk} is indeed an integer solution, due to the choice of $M_h$ and $J$. The fourth setting is similar to the third case above, but is one where Theorem~\ref{thm:optimal_allocation_blk} provides an exact solution that is not an integer solution. An exhaustive search leads to identification of six different allocations, each of which is optimal. In this case, the greedy algorithm identifies one of these solutions. The fifth setting satisfies neither Condition~\ref{cond_WBH} (WBH) nor Condition~\ref{cond_BBH} (BBH), and consequently Theorem~\ref{thm:optimal_allocation_blk} cannot provide an exact D-optimal solution. However, the greedy algorithm identifies the true optimal integer allocation (where truth is identified through exhaustive search).
		
		\begin{table}[ht]
			\centering \footnotesize
			\begin{tabular}{c|m{4cm}|m{1cm}|m{2.5cm}|m{3.5cm}|m{3.5cm}}
				S.No. & Case & Block Size ($M_h$) & Variances ($S_{h,j}^2$)  & Exhaustive search \newline E-optimal solution & Greedy solution \\[3ex]
				\hline & & & & & \\
				1. & Equal blocks with equal variances & 
				$\begin{bmatrix} 40 \\40 \end{bmatrix}$ & 
				$\begin{bmatrix} 1 & 1 & 1 & 1 \\ 1 & 1 & 1 & 1 \end{bmatrix}$ &  $\begin{bmatrix} 10 & 10 & 10 & 10 \\ 10& 10 & 10 & 10 \end{bmatrix} $ & $\begin{bmatrix} 10 & 10 & 10 & 10 \\ 10& 10 & 10 & 10 \end{bmatrix}$ \\ 
				& & & & & \\
				2. & Equal blocks with equal variances for all treatments within block & 
				$\begin{bmatrix} 40 \\40 \end{bmatrix}$ & 
				$\begin{bmatrix} 4 & 4 & 4 & 4 \\ 1 & 1 & 1 & 1 \end{bmatrix}$ &  $\begin{bmatrix} 10 & 10 & 10 & 10 \\ 10& 10 & 10 & 10 \end{bmatrix} $ & $\begin{bmatrix} 10 & 10 & 10 & 10 \\ 10& 10 & 10 & 10 \end{bmatrix}$ \\
				& & & & & \\
				3. & Unequal blocks with equal variances across blocks for each treatment & 
				$\begin{bmatrix} 40 \\20 \end{bmatrix}$ & 
				$\begin{bmatrix} 1 & 2 & 3 & 4 \\ 1 & 2 & 3 & 4 \end{bmatrix}$ &  $\begin{bmatrix} 10 & 10 & 10 & 10 \\ 5 & 5 & 5 & 5 \end{bmatrix} $ & $\begin{bmatrix} 10 & 10 & 10 & 10 \\ 5 & 5 & 5 & 5 \end{bmatrix}$ \\
				& & & & & \\
				4. & Unequal blocks with equal variances but exact solution is non-integer & 
				$\begin{bmatrix} 40 \\20 \end{bmatrix}$ & 
				$\begin{bmatrix} 1 & 2 & 3 & 5 \\ 1 & 2 & 3 & 5 \end{bmatrix}$ &  $\begin{bmatrix} 
					\begin{bmatrix} 10 & 10 & 10 & 10\\8& 8 & 7 & 7 \end{bmatrix} \\[1em]
					\begin{bmatrix} 10 & 10 & 10 & 10\\8& 7 & 8 & 7 \end{bmatrix}\\[1em]
					\begin{bmatrix} 10 & 10 & 10 & 10\\7& 8 & 8 & 7 \end{bmatrix}\\[1em]
					\begin{bmatrix} 10 & 10 & 10 & 10\\8& 7 & 7 & 8\end{bmatrix}\\[1em]
					\begin{bmatrix} 10 & 10 & 10 & 10\\7& 8 & 7 & 8\end{bmatrix}\\[1em]
					\begin{bmatrix} 10 & 10 & 10 & 10\\7& 7 & 8 & 8\end{bmatrix}
				\end{bmatrix}$ & 
				$\begin{bmatrix}10 & 10 & 10 & 10\\8& 7 & 8 & 7 \end{bmatrix}$ \\
				& & & & & \\
				5. & Equal blocks with unequal variances & 
				$\begin{bmatrix} 40 \\20 \end{bmatrix}$ & 
				$\begin{bmatrix} 1 & 2 & 3 & 4 \\ 4& 3 & 2& 1 \end{bmatrix}$ &  $
				\begin{bmatrix} 7 & 10 & 11 & 12 \\ 7 & 6 & 4 & 3 \end{bmatrix} $ & $\begin{bmatrix} 7 & 10 & 11 & 12 \\ 7 & 6 & 4 & 3  \end{bmatrix}$ 
			\end{tabular}
			\caption{Summary of Greedy algorithm solutions for D-optimality for $H=2, K=2$}
			\label{tbl:D-opt_GreedySim}
		\end{table}

		\section{Greedy Algorithm for RBD}\label{append:greed_alg}
		Using the methods in Section \ref{ss:computation}, we can use the same algorithm in Figure~\ref{fig:greedyalg} to obtain the optimal integer solutions for the A-optimality case under block design by taking $f_{h,j}(M_{h,j})= (M_h^2/N^2)(S_{h,j}^2/M_{h,j})$. The exact algorithm taking the structure of the blocks into account is given in Figure~\ref{fig:greedyalgBlk}.
		
		For D- and E-optimality, we extend the greedy idea from the previous parts and provide the algorithm in Figure~\ref{fig:greedyalgBlk} and \ref{fig:greedyalgEblk}. For D-optimality, take $f_{h,j}(M_{h,j})= \log(\sum_{h=1}^H(M_h^2/N^2)(S_{h,j}^2/M_{h,j}))$. For E-optimality, take $f_{h,j}(M_{h,j})= (M_h^2/N^2)(S_{h,j}^2/M_{h,j})$.
		
		\begin{figure}[htbp]
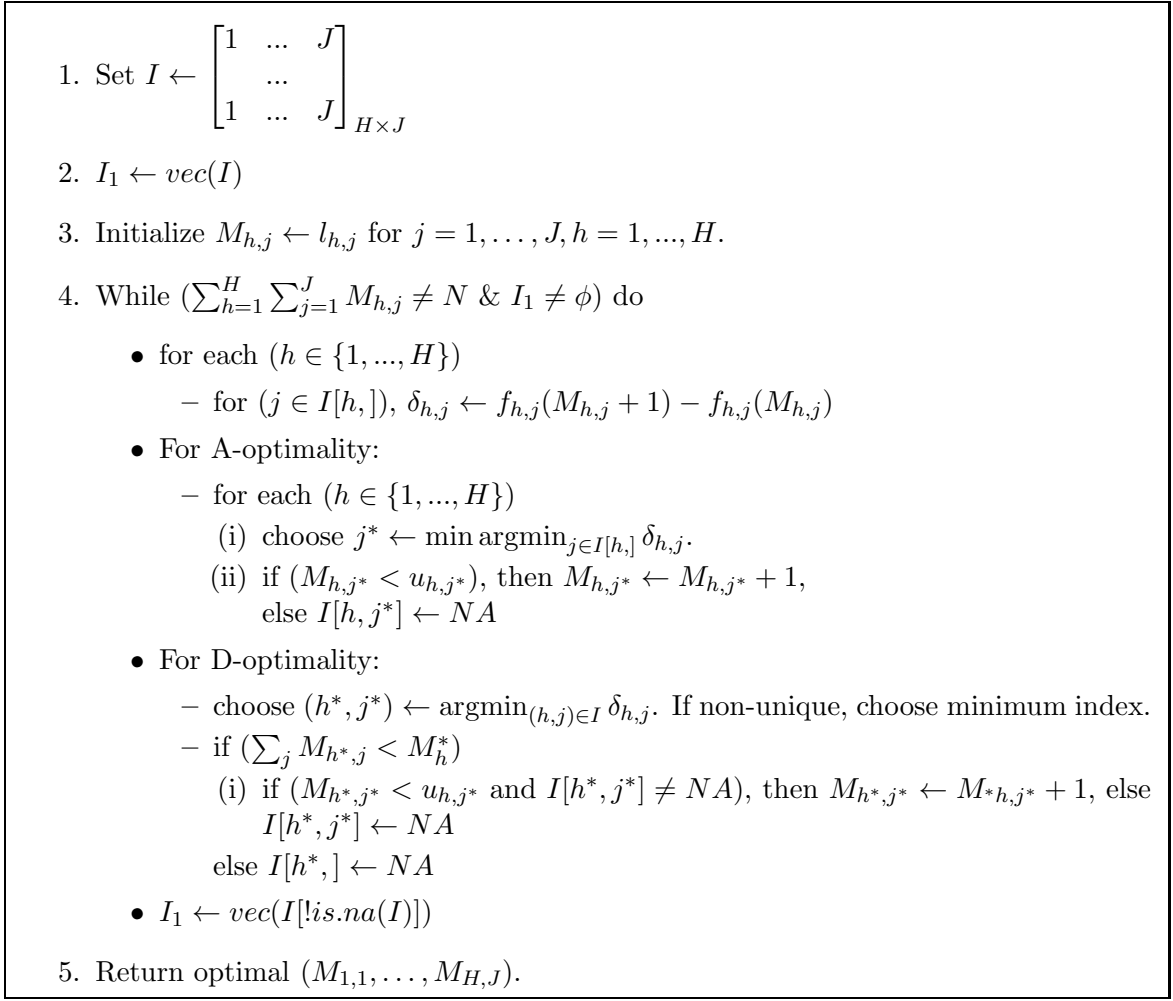

			\caption{Greedy algorithm for A- \& D-optimality} \label{fig:greedyalgBlk}
			\centering
			\fbox{
				\begin{minipage}{15 cm}
					\begin{enumerate}
						\item Set $I \leftarrow \begin{bmatrix}
							1 & ...& J \\
							& ... & \\
							1 & ...& J 
						\end{bmatrix}_{H\times J}$
						\item $I_1 \leftarrow vec(I)$
						\item Initialize $M_{h,j} \leftarrow l_{h,j}$ for $j=1, \ldots, J, h=1,...,H$. 
						\item While ($\sum_{h=1}^H\sum_{j=1}^J M_{h,j} \ne N \ \& \ I_1 \ne \phi$) do
						\begin{itemize}
							\item for each $(h \in \{1,...,H\})$
							\begin{itemize}
								\item for $(j \in I[h,])$, $\delta_{h,j} \leftarrow f_{h,j}(M_{h,j}+1) - f_{h,j}(M_{h,j})$
							\end{itemize}
							\item For A-optimality:
							\begin{itemize}
								\item for each $(h \in \{1,...,H\})$
								\begin{itemize}
									\item[(i)]  choose $j^* \leftarrow \min \argmin_{j \in I[h,]} \delta_{h,j}$.
									\item[(ii)]  if $(M_{h,j^*} < u_{h,j^*})$, then $M_{h,j^*}  \leftarrow M_{h,j^*}  + 1$, \\
									else $I[h,j^*] \leftarrow NA$
								\end{itemize}
							\end{itemize}
							
							\item For D-optimality:
							\begin{itemize}
								\item choose $(h^*,j^*) \leftarrow \argmin_{(h,j) \in I} \delta_{h,j}$. If non-unique, choose minimum index.
								\item if $(\sum_j M_{h^*,j} < M_{h}^*)$
								\begin{itemize}
									\item[(i)] if $(M_{h^*,j^*} < u_{h,j^*}$ and $I[h^*,j^*] \ne NA)$, then $M_{h^*,j^*}  \leftarrow M_{^*h,j^*}  + 1$, 
									else $I[h^*,j^*] \leftarrow NA$
								\end{itemize} 
								else $I[h^*,] \leftarrow NA$
							\end{itemize}
							
							\item $I_1 \leftarrow vec(I[!is.na(I)])$
						\end{itemize}
						\item Return optimal $(M_{1,1}, \ldots, M_{H,J})$.
					\end{enumerate}
				\end{minipage}
			}
		\end{figure}
		
		\begin{figure}[htbp]
			\caption{Greedy algorithm for E-optimality in the Block Case} \label{fig:greedyalgEblk}
			\centering
			\fbox{
				\begin{minipage}{15 cm}
					\begin{enumerate}
						\item Set $I \leftarrow \begin{bmatrix}
							1 & ...& J \\
							& ... & \\
							1 & ...& J 
						\end{bmatrix}_{H\times J}$
						\item $I_1 \leftarrow vec(I)$
						\item Initialize $M_{h,j} \leftarrow l_{h,j}$ for $j=1, \ldots, J, h=1,...,H$. 
						\item While ($\sum_{h=1}^H\sum_{j=1}^J M_{h,j} \ne N \ \& \ I_1 \ne \phi$) do
						\begin{itemize}
							\item for each $(h \in \{1,...,H\})$
							\begin{itemize}
								\item for $(j \in I[h,])$, $\delta_{h,j} \leftarrow f_{h,j}(M_{h,j}+1) - f_{h,j}(M_{h,j})$
							\end{itemize}
							\item choose $j^* \leftarrow \min \argmin_{j \in \{1,...,J\}} \sum_{h=1}^H f_{h,j}$
							\item choose $h^* \leftarrow \min \argmin_{h\in \{1,...,H\}} \delta_{h,j^*}$
							\item if $(\sum_j M_{h^*,j} < M_{h}^*)$
							\begin{itemize}
								\item if $(M_{h^*,j^*} < u_{h,j^*}$ and $I[h^*,j^*] \ne NA)$, then $M_{h^*,j^*}  \leftarrow M_{^*h,j^*}  + 1$, 
								else $I[h^*,j^*] \leftarrow NA$
							\end{itemize} 
							else $I[h^*,] \leftarrow NA$
							\item $I_1 \leftarrow vec(I[!is.na(I)])$
						\end{itemize}
						\item Return optimal $(M_{1,1}, \ldots, M_{H,J})$.
					\end{enumerate}
				\end{minipage}
			}
		\end{figure}

		%%%%%%%%%%%%%
		\pagebreak
	\end{appendices}

\end{document}